\DeclareMathOperator*{\argmax}{arg\,max}
\g@addto@macro{\normalsize}{%
   \setlength{\abovedisplayskip}{3pt plus 2pt minus 2pt}
   \setlength{\abovedisplayshortskip}{3pt plus 2pt minus 2pt}
   \setlength{\belowdisplayskip}{3pt plus 2pt minus 2pt}
   \setlength{\belowdisplayshortskip}{3pt plus 2pt minus 2pt}
   \setlength{\textfloatsep}{10pt plus 2pt minus 2pt}
   }
\newtheorem{theo}{Theorem}
\newtheorem{defi}{Definition}
\newtheorem{prop}{Proposition}
\newtheorem{remark}{Remark}
\newcommand{\mb}{\mathbf}        
\newcommand{\conv}{\textrm{conv}} 
\newcommand{\Csize}{L}
\newcommand{\Uc}{\mathcal{U}}   
\newcommand{\Bc}{\mathcal{B}}             
\newcommand{\Cc}{\mathcal{C}}
\newcommand{\Kc}{\mathcal{K}}
\begin{document}

\title{User Association and Interference Management in Massive MIMO HetNets }

\author{Qiaoyang Ye, 
			Ozgun Y. Bursalioglu, 
			Haralabos C. Papadopoulos, 
			Constantine Caramanis
        	and Jeffrey G. Andrews
\thanks{Q. Ye, C. Caramanis and  J. G. Andrews are with WNCG, The University of Texas at Austin, Austin, TX, USA, O. Y. Bursalioglu and H. C. Papadopoulos are with Docomo Innovations Inc, Palo Alto, CA, USA.  Email: 
qye@utexas.edu,  \{obursalioglu, hpapadopoulos\}@docomoinnovations.com, constantine@utexas.edu, jandrews@ece.utexas.edu. 
Manuscript last revised: \today.}}

\maketitle

\begin{abstract}
Two key traits of 5G cellular networks are much higher base station (BS) densities -- especially in the case of low-power BSs -- and the use of massive MIMO at these BSs. This paper explores how massive MIMO can be used to jointly maximize the offloading gains and minimize the interference challenges arising from adding small cells. We consider two interference management approaches: joint transmission (JT) with local precoding, where users are served simultaneously by multiple BSs without requiring channel state information exchanges among cooperating BSs, and resource blanking, where some macro BS resources are left blank to reduce the interference in the small cell downlink. A key advantage offered by massive MIMO is channel hardening, which enables to predict instantaneous rates a priori. This allows us to develop a unified framework, where resource allocation is cast as a network utility maximization (NUM) problem, and to demonstrate large gains in cell-edge rates based on the NUM solution. We propose an efficient dual subgradient based algorithm, which converges towards the NUM solution. A scheduling scheme is also proposed to approach the NUM solution. Simulations illustrate more than 2x rate gain for 10th percentile users vs. an optimal association without interference management.
\end{abstract}


\section{Introduction}\label{sec:introduction}
Smart densification \& heterogeneity in  base station (BS) deployments and massive MIMO are considered as two of the most important technologies in 5G cellular systems~\cite{And7ways13,And14,BocEtAl14}.\footnote{As higher-frequency spectrum being available, large arrays become practical even for small cells. For example, at 3.5GHz band, a 36-antenna (arranged on a square grid at half-wavelength separation) can be implemented on a 26cm $\times$ 26cm surface. The required implementation area would be  smaller  as the carrier frequency becomes higher.} 
The massive MIMO regime is the setting when the number of antennas at a BS is significantly larger than the number of users that are simultaneously served by the BS \cite{Mar10,HoyTen13,LarEdf14}. In this paper, various aspects such as user association, load balancing, scheduling and interference management are considered for future networks with massive MIMO deployments where these technologies are adapted together.

\vspace{-0.4cm}
\subsection{Motivation and Related Work}\label{sec:relatedwork}
Conventionally, mobile user equipments (UEs) are served by the BS providing the largest signal-to-interference-plus-noise ratio (SINR) or the largest received power \cite{3GPP872} -- called \emph{max-SINR association} in this paper. 
In heterogeneous networks (HetNets), however, different types of BSs can have large disparities in transmit power, so a max-SINR association results in heavily congested macrocell BSs and lightly loaded low-power small cell BSs.  This results in a  very inefficient use of available time-frequency resources, and strongly motivates load balancing, which in effect means pushing some UE traffic onto lightly loaded small cells even if it requires reducing their SINRs by many dBs~\cite{AndLBoverview14}.  

\textbf{Load Balancing.} Several approaches have been used to study load balancing in HetNets, including stochastic geometry \cite{SinDhi13,JoSan12}, game theory~\cite{AryKes13} and system-level simulations \cite{DamMon11,GhoAnd12}.   Meanwhile, in industry, proactive load balancing is accomplished by \emph{biasing} UE association towards the small cells \cite{DamMon11,GhoAnd12}.\footnote{Biasing refers to artificially adding a bias value (e.g., $10$ dB) to received signal power from small cell layer at UEs. } Our initial study on load balancing \cite{YeAnd13LB} formulated a network utility maximization (NUM) problem for user association in HetNets with single-antenna BSs, where resources are equally allocated among users in the same cell.  
The equal resource allocation can be suboptimal if the user associations happen on a much slower time scale than the channel variations. 
In general, the user association and scheduling (resource allocation) problems are coupled, and it is quite difficult to jointly optimize them.

\textbf{Massive MIMO}.  A key benefit of massive MIMO is that the extra diversity afforded by the large antenna array averages out the fast fading, and thus the instantaneous rate stabilizes to the long-term mean which changes on slow time scales.  
As shown in \cite{BetBur14a}, the instantaneous rates can be predicted with peak-rate proxies, which are independent of scheduled instances. 
This property allows the decoupling of user association and scheduling, which is exploited to achieve near-optimal load balancing in massive MIMO HetNets with cellular transmission (where data for each user is transmitted from a single BS)~\cite{BetBur14a}.

\textbf{Coordinated multi-point transmission}. MIMO techniques also provide the option of serving a user at high rates from multiple BSs -- referred to as coordinated multi-point transmission (CoMP), which is proposed as one of the core features in LTE-Advanced \cite{GesHan10,SawKis10,LeeSeo12}. The set of BSs that cooperatively serve the same user is referred in this paper as a \emph{BS cluster}. Paper \cite{MarFet11} studies how to determine the BS clusters, while~\cite{LiSve11,ZhaQue13,DudeV14,BjoKou13,HonSun13,SanRaz14,LiBjo15} investigate jointly optimized designs involving some of the following aspects: BS cluster selection, beamforming (e.g., coordinated beamforming and joint transmissions), user scheduling and power allocation. In studies such as \cite{BjoKou13,HonSun13,SanRaz14,LiBjo15}, the complexity of proposed algorithms can become prohibitive as number of antennas increases. On the other hand, an efficient suboptimal precoder for the single-cell scenario with large antenna arrays is proposed in \cite{BjoKou13}.


\textbf{Resource Blanking}.  As the macrocells that users are offloaded from now become strong interferers for these offloaded users, the increased interference eats into the gains offered by load balancing. This motivates us to jointly consider user association and interference management.  Besides CoMP, another popular interference management approach is to leave some macro resource blocks (RBs) blank, similar to enhanced intercell interference coordination (eICIC) in 3GPP \cite{LopGuv11}. The key difference between RB blanking in our work and eICIC is that eICIC focuses on the time domain, while in this work blanking is applied in both time and frequency domains. We call the RBs where macro BSs are muted the \emph{blank RBs}, while the rest of RBs are called \emph{normal RBs}.  Several works have considered the joint problem of user association and RB blanking. For example, \cite{VasPup13,LiuLau14} proposes a dynamic approach adapting the muting duty cycle to load variations, while \cite{YeAnd13ABS,BedAgr13,GhiRos13,DebMon14,SinAnd14} consider a more static approach.

For general multi-cell massive MIMO HetNets, the joint optimization of user association and interference management including joint transmission and resource blanking is still an open issue. In this work, we combine various aspects of resource allocation and interference management including resource blanking, joint transmission, association, user scheduling, etc. for massive MIMO deployments.  
We focus on a  distributed-MIMO form of CoMP, which allows local precoding at each BS and does not require channel state information (CSI) exchanges among cooperating BSs \cite{BjoZak10}. We call this specific form of CoMP as Local Joint Transmission (LJT). LJT allows us to  develop a systematic resource allocation approach for CoMP (including cellular transmission as a special case). 
Other interference management approaches (e.g. \cite{HosRas14}) can also be adopted at the cost of additional complexity and overheads (e.g., schemes with joint precoding as discussed in \cite{LeeSeo12}), but such designs are beyond the scope of this paper.

\textbf{Cross-layer Optimization.}  
To study the joint user association and interference management problem, we propose to use the  cross-layer optimization approach, aiming to improve the rate distribution, particularly the cell-edge performance. 
Cross-layer optimization is quite popular in the study of resource allocation (see, e.g., \cite{ShaRap03,LinShr06} and references therein). 
Among these, besides studies with disjoint clusters (i.e., each BS belongs to at most  one cluster on any RB), e.g.  \cite{DudeV14}, user-specific clusters with overlapping BSs have also been considered \cite{Bjornson13,BjoJal11}. At any scheduling instant, the cluster formation method we consider can be described using ``Dynamic Cooperation Clusters (DCC)'' concept of \cite{Bjornson13}. An important difference between works following the DCC scheme \cite{Bjornson13} and our work resides in the selection of which BSs serve which users (which also inherently specify active clusters at an RB). For the former case, this selection is based on instantaneous channel gains between users and BSs as in \cite{BjoJal11}. In particular, the resource allocation in  \cite{BjoJal11,Bjornson13} (consisting of as precoding design, scheduling and power allocation) is done at each RB to optimize, for example, instantaneous user rates for some utility function. 
On the other hand,  in our setting massive MIMO rate hardening allows us to allocate resources over many RBs to optimize a utility function over long-term user rates. Thus, the BS cluster selection for each user is determined as a result of load balancing and resource allocation across many RBs,  which are performed ahead of time at a much coarser time scale than an individual-RB time scale.
We then present scheduling policies at a finer time scale (i.e., RB level) to approach the optimized (coarser time scale) resource allocation.\footnote{ In this paper similar to \cite{LiuLau14}, we also consider resource allocation over two different time scales. \cite{LiuLau14} exploits the sparsity of the interference graph of the HetNet topology to overcome the complex coupling between user scheduling and RB blanking. }

\vspace{-0.4cm}
\subsection{Contributions and Organization}
In this paper, we present a novel framework for the joint optimization of user association and interference management in massive MIMO HetNets, resulting in the following main contributions.


\textbf{A unified NUM problem.} 
By exploiting the predictable instantaneous rate, user association and scheduling problems can be decoupled, allowing us in Sec. \ref{sec:formulation} to formulate a unified convex optimization problem for resource allocation with both LJT and RB blanking. Note that in the considered LJT, the clusters are user-specific (i.e., different users can be served by different clusters).
The formulated NUM problem can also be applied to scenarios where some bandwidth resources are explicitly reserved for macro or small cell operation, while some resources are reserved for being shared by both layers. As an extension of \cite{BetBur14a}, the optimal solutions can always be realized by a suitably designed scheduler when blanking is used in cellular transmission. On the other hand, with LJT, we show  that there exist some solutions that are not implementable. Naturally, the solution of the NUM problem -- called the \emph{NUM solution} -- upper bounds the network performance and can serve as a useful benchmark. 

\textbf{Dual subgradient based algorithm.}  
Sec. \ref{sec:dual-algo} presents an efficient algorithm based on the dual sub-gradient method, which converges towards the optimal dual variables. As the objective function is not strictly convex, it is difficult to get the optimal primal variables given optimal dual variables. Exploring the solution structure, we formulate a small-size linear program (LP) to get the optimal primal variables. 


\textbf{Simple scheduling scheme to approach the NUM solution.} 
 Note that the NUM solution provides the desirable resource allocation at a coarser time scale. In Sec. VI, we further present scheduling policies at a finer time scale (i.e. RB level), which target approaching the optimized resource allocation in the long term.


Simulations\footnote{Some of these results are also published in \cite{YeBursaPapa15}.} in Sec. \ref{sec:simulation} show that the proposed  harmonized CoMP/cellular operation can  provide significant gains with respect to cellular-only massive MIMO operation \cite{BetBur14a}, especially for cell-edge users.
For example, the rate of  bottom (10th percentile)  users  in our setup is about 2.2$\times$ with respect to the optimal user association without interference management, which itself is much larger than the max-SINR association. Also, the utility provided by the proposed  scheduling scheme is within 90\% of the utility provided by the NUM~solution.

For convenience, the key notation in this paper is summarized in Table \ref{tb:notation}.
\begin{table}
\label{tb:notation}
\caption{Notation Summary}
\begin{center}
\begin{tabular}{|c||c|}
\hline
Notation & Description\\
\hline
$\mathcal{B}_{\rm m} $, $\mathcal{B}_{\rm s}$, $\mathcal{B} $, $\mathcal{U}$ & set of macro BSs, small cell BSs, all BSs, UEs, respectively\\
\hline
$M_j $ & number of antennas at BS $j$\\
\hline
$P_j$ & transmit power of BS $j$\\
\hline
 $\mathbf{G}_j$ & channel matrix between BS $j$ and its users\\
 \hline
 $\mathbf{g}_{kj}$ & channel vector between BS $j$ and UE $k$\\
 \hline
 $g_{kj,i},h_{kj,i}$ & channel, fast fading between $i$th antenna of BS $j$ and UE $k$, respectively \\
\hline
$ \beta_{kj}$ & slow fading between BS $j$ and UE $k$\\
\hline
$\mathbf{f}_{kj}$ & precoder of BS $j$ for UE $k$\\
\hline
$w_k, \sigma^2$ & AWGN, variance of AWGN, respectively\\
\hline
$L$ & cluster size\\
\hline
$S_j$ & number of users that can be simultaneously served by BS $j$ in cellular transmission\\
 \hline
$S_j(L)$ & number of users that can be simultaneously served by BS $j$ in cluster of size $L$\\
\hline
$A$& operation option\\
\hline
$\mathcal{B}^{(A)}$& set of active BSs in operation $A$\\
\hline
$L_{\max}^{(A)}$ & maximal cluster size in operation $A$\\
\hline
$\Cc$ & cluster index\\
\hline
$\mathcal{U}_\Cc$ & set of users served by cluster $\Cc$ \\
\hline
$y_k(t)$ & received signal at UE $k$ on RB $t$\\
\hline
$s_k$ & transmitted signal for UE $k$\\
\hline
$r_{k\Cc}^{(A)}$ &  instantaneous rate of user $k$ from cluster $\Cc$ in Band-$A$ in massive MIMO regime\\
\hline
$x_{k\Cc}^{(A)}$ & activity fraction of UE $k$ from cluster $\Cc$ in Band-$A$\\
\hline
$R_k$ & long-term rate of UE $k$\\
\hline
$\mu_A$ & fraction of RBs allocated to Band-$A$\\   
\hline
$\lambda_{AL}$ & fraction of RBs allocated to size-$L$ clusters in Band-$A$\\
\hline
$\tilde{x}_{k\Cc}^{(A)}$ & approximate activity fraction by unique association\\
\hline
$\Cc^*(k)$ & the cluster that serves UE $k$ in the unique association\\
\hline
$\alpha_k$ & the desired fraction of resources for UE $k$ in the considered band\\
\hline
$\tilde{R}_k$ & assumed rate for the VQ scheduling, where $\tilde{R}_k = 1/\alpha_k$\\
\hline
$A_{\max}, V$ & sufficiently large parameters in VQ scheduling scheme\\
\hline 
$Q_k(t), A_k(t)$ & VQ length, arrival rate of UE $k$ in VQ scheduling scheme, respectively\\
\hline
$\rho$ & a tunable parameter to characterize how many users the BSs can schedule simultaneously\\
\hline
\end{tabular}
\end{center}
\end{table}

\section{System Model}\label{sec:model}

In this paper, we focus on best-effort traffic and consider downlink (DL) transmission in a HetNet with $J$ BSs and $K$ single-antenna users. We let $j\!\in\!\mathcal{B}\!=\!\{1, 2, \dots, J\}$ and $k\!\in\!\mathcal{U}\!=\!\{1, 2, \dots, K\}$ denoted the indices of BSs and users, respectively. Without loss of generality, we focus on two-tier HetNets comprising a macro layer and a small-cell layer, such as the one considered in Fig. \ref{fig:system-model}. Letting $\mathcal{B}_{\rm m}$ and $\mathcal{B}_{\rm s}$ be the set of macro and small cell BSs, respectively, the BSs belonging to  $\mathcal{B}_{\rm m}$ and $\mathcal{B}_{\rm s}$ can differ in terms of transmit power, size, density, and number of antennas \cite{And7ways13}. 
The number of antennas at BS $j$  is denoted by $M_j$ with $M_j\gg 1$. 
We assume time division duplex (TDD) operation with reciprocity-based CSI acquisition \cite{Mar10,HuhCai11}. Hence, each user sends a single uplink (UL) pilot to train \emph{multiple nearby} BSs. In contrast to feedback-based CSI acquisition, this enables the training of large antenna arrays with overhead proportional to the number of simultaneously served users.  Moreover, it  enables CoMP with practical CSI acquisition overheads.
We also assume a block-fading channel model where the channel coefficients remain constant within each RB \cite{CaiJin10,Mar10,HuhCai11,HoyTen13}.

\begin{figure*}[tp]
\centering
\setcounter{subfigure}{0}
\subfigure[Cellular]{\label{fig:system-model-a}\includegraphics[width=4.52cm, height=3.38cm]{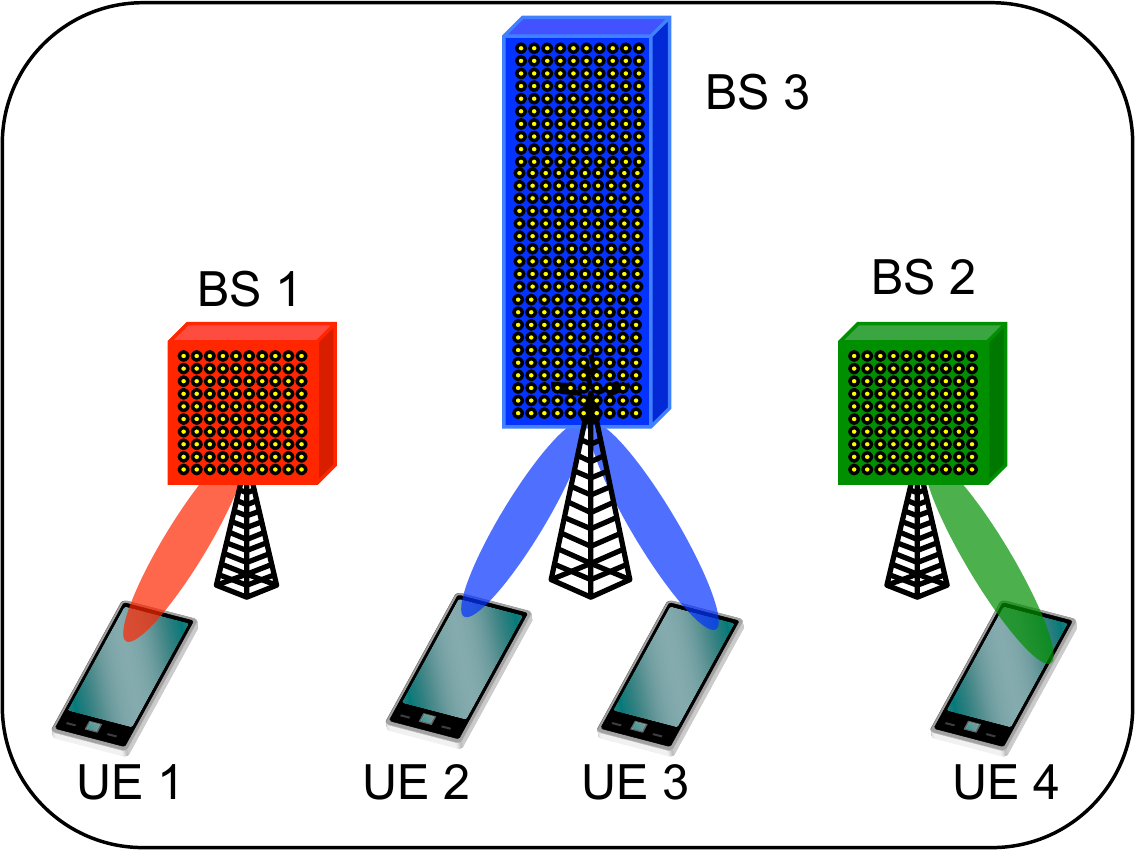}}
 \hspace{0.4in}
\subfigure[Local Joint Transmission]{\label{fig:system-model-b}\includegraphics[width=4.52cm, height=3.38cm]{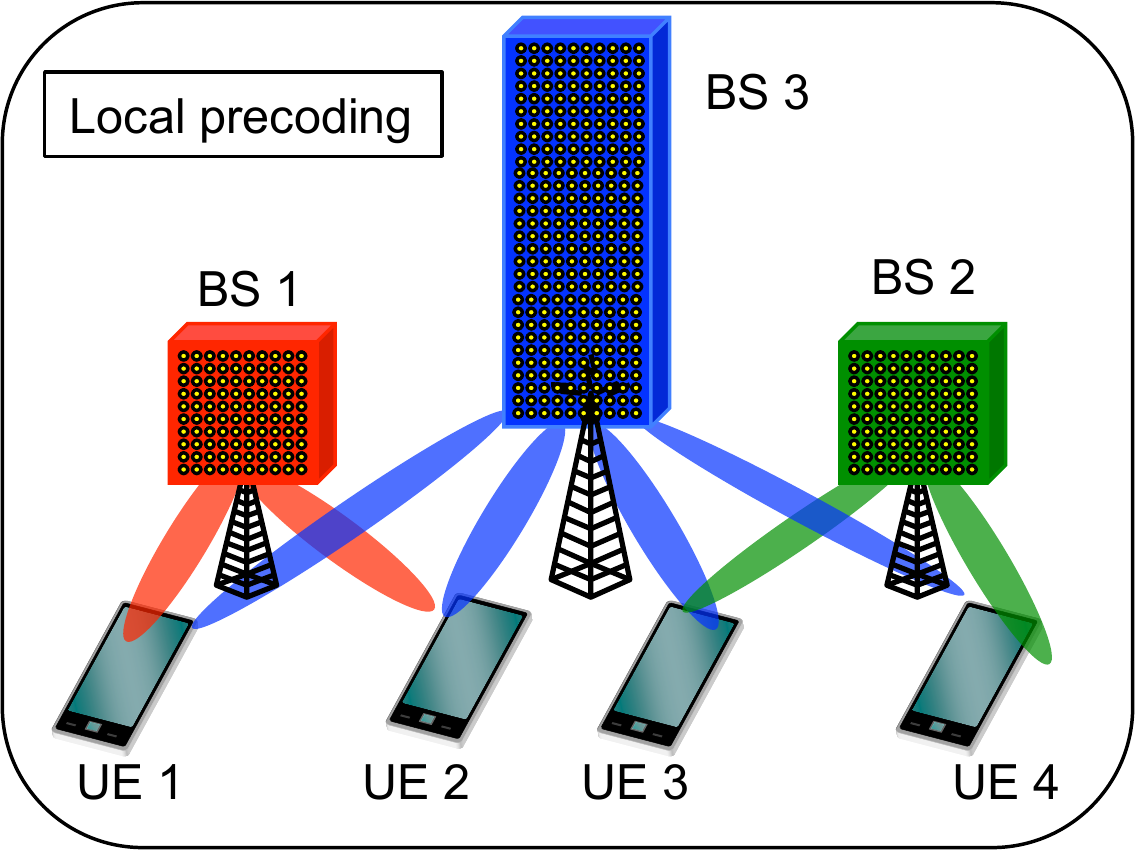}}
 \hspace{0.4in}
\subfigure[Global Joint Transmission]{\label{fig:system-model-c}\includegraphics[width=4.52cm, height=3.38cm]{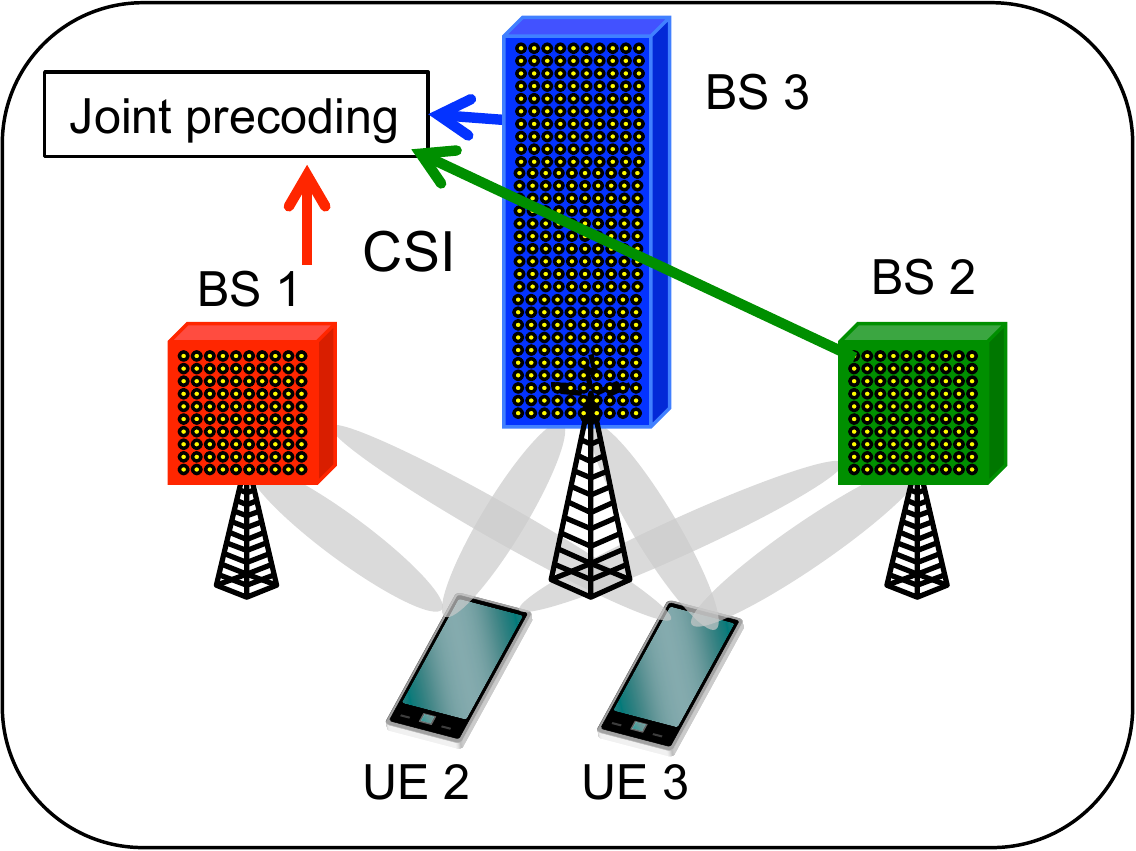}}
\setlength{\abovecaptionskip}{0pt}
\caption{Various MIMO transmission schemes. Color of the beam serving a user indicates where the CSI used for precoding is obtained. For both cellular and local joint transmission, beams are the same color with the BS they are emitted from because in these cases precoding is done only using locally obtained CSI. On the other hand with global joint transmission beams are precoded jointly using CSI from all BSs hence the beams are gray colored.}
\label{fig:system-model}
\end{figure*}

With massive MIMO, a subset of users are scheduled (or, in other words, are active) on each RB. Various transmission schemes in terms of BS-cluster options are possible with this setup as shown in Fig. \ref{fig:system-model}. At one extreme is \emph{cellular transmission} shown in Fig. \ref{fig:system-model-a}, where the coded data for any given active user is transmitted from a single BS and each BS manages interference only to the users it serves. This corresponds to having $|\Bc|$ many disjoint clusters of size $1$. Fig. \ref{fig:system-model-c} shows the other extreme, Global Joint Transmission (ideal/full CoMP) where all BSs serve and coordinate interference to all active users in the network. In this case there is a single cluster serving all of the users. Other transmission schemes with cluster options in between these two extremes are also possible. DCC \cite{Bjornson13} can be used to describe any of these transmission schemes. Fig. \ref{fig:system-model-b} shows the scheme considered in this work where BS-clusters are user-specific. According to this scheme, different active users can be served simultaneously by different potentially overlapping clusters.

CoMP has some well-known advantageous over cellular transmission:\\
 \noindent (1) \textbf{Performance gain at the cell edge:} 
The beamforming (BF) gain becomes intra-cluster BF gain in CoMP, as the same coded data is transmitted from all BSs serving the user. Similarly, the intra-cell interference mitigation is extended across the cluster of BSs by which the user is served.  As a result, the performance gain can be realized at the cell edge.

\noindent (2) \textbf{Low training overhead:}  A single UL pilot  from a user  can be received at all nearby BS antennas, whether these are in the same or different locations. Thus, the CSI acquisition  between a user and nearby BSs need not  incur additional overheads with respect to cellular transmission in the TDD system. The total number of users that can served by the system depends on the number of available UL pilots, and it should be the same regardless of cellular or CoMP if the UL pilots in cellular transmission and CoMP are the same.

A major disadvantage of the general CoMP \cite{ZhaChe09,HuhTul12,LeeSeo12} is that it incurs additional overhead compared to cellular transmission for CSI exchanges among cooperating BSs. To overcome this challenge, we focus on schemes that perform local precoding at each BS: The user beams (i.e., precoding vectors) at any RB for any BS $j$ are designed as if BS $j$ was in cellular multi-user (MU)-MIMO transmission over all the users it serves. All BSs serving a user transmit the same coded data stream to the user \cite{BjoZak10}. Each BS transmits the stream on a beam that is (independently) designed for the users at that BS.
For instance, the beam with Linear Zero forcing Beamforming (LZFBF) for each user served by BS $j$  is chosen within the null space of the channels of all the other users served by BS $j$, no matter whether there are additional BSs serving the user on the same RB or not. Due to local precoding,  BS $j$ only needs CSI between the users it serves and its own antennas to generate the user beams. 
Hence, the challenge of costly CSI exchanges between BSs is eliminated.

In contrast to full CoMP\footnote{With general CoMP schemes (e.g., cluster ZFBF transmission), the instantaneous rate that is provided to a user by a cluster of BSs  is  {\em also} a function of the identity (and in fact the large-scale channel coefficients) of the set of the {\em other}  users scheduled for cluster transmission with the user, and can thus vary from slot to slot depending on the scheduling set \cite{HuhTul12}.  As a result, in contrast to the LJT schemes we consider (whereby a user's instantaneous rate is independent of the identity of the active users in the slot),  general CoMP schemes are not amenable to the load balancing and scheduling techniques presented in this work.}, with local precoding of the form depicted in Fig. \ref{fig:system-model-b}, the  instantaneous rate is \emph{independent}  of the other active users' channels and only depends on the power allocation for user streams served by the BS. By allocating BS power equally across the set of active users\footnote{Power allocation is a thoroughly studied topic in the context of MIMO in general (see, e.g.~ \cite{ChiHan08}). With large antenna arrays, massive MIMO systems are able to get substantially better SINRs even without considering any power allocation optimization. For example,  \cite{HuhTul12} considers equal power allocation while Marzetta in his pioneering work \cite{Mar10} allocates power to users proportional to their channel gains, which is in the reverse direction of typical power allocation  in the context of a fairness criterion (whereby more power is typically allocated at the cell-edge). Following this trend in massive MIMO and considering the high complexity of power allocation optimization, we consider equal active user-stream power allocation at each BS. This approach simplifies the parametrization of peak-rate calculations in Section \ref{sec:analyze} and yields the convex NUM formulation in Section \ref{sec:formulation}.} and fixing the scheduling set sizes for BSs belong to common BS clusters, the instantaneous rate can be predicted \emph{a priori} and independently of the other active-user set, thereby substantially reducing the complexity of the resource allocation problem. 

In this work, we restrict our attention to a small set of predefined possible scheduling set sizes, while how to select these sizes depend on many factors (e.g., number of antennas at the BSs, network deployment, etc.) and we leave the study to future work.
Due to the fact that an uplink user-pilot trains all antennas at nearby BSs, we assume BSs that serving users at larger-size clusters (during a given scheduling slot) can schedule a larger number of users in the same  slot than  BSs serving users at smaller-size clusters or with cellular transmission. We thus let the size of scheduling set of any BS $j$ depend on the size of the cluster including BS $j$. To avoid ambiguity on the scheduling set size of different BSs in the same BS cluster, we enforce the following condition: All users served by a given BS $j$ are served in clusters of the {\em same} size on a given RB. This constraint ensures that overlapping clusters on any RB are of the same size.
Let $S_j$ denote the number of users that can be simultaneously served by BS~$j$ in the cellular transmission. We summarized the properties of the specific form of LJT considered in this work in the following definition:


\begin{defi}\label{admissible-dmimo}{\bf Admissible Local Joint Transmission Schemes (ALJTSs):}
An ALJTS schedules users for transmission on a sequence of RBs, and satisfies the following on each RB:
\begin{enumerate}[(1)]
\item All users served by a given BS $j$ are served in BS clusters of the {\em same} size $\Csize$, for some $\Csize\ge 1$;
\item BS $j$ in BS clusters of size $L$ serves at most $S_j(\Csize)$ users with $ S_j \le S_j(\Csize)\le \Csize S_j$ and $M_j \gg S_j(\Csize)$\footnote{In this work, the $S_j$'s and $S_j(L)$'s are assumed to be predefined parameters. Their impact on performance (and their choice) in practice depends on many factors (e.g., available UL pilot dimensions per RB for training in the network, spatial pilot reuse, network deployment, etc.).};
\item The user beams (i.e., precoding vectors) at BS $j$ are designed as if BS $j$ were performing cellular multi-user (MU)-MIMO transmission over (at most) $S_j(L)$ users;
\item All BSs serving an active user transmit the same coded data stream to the user. Each BS transmits this user stream on a beam that is (independently) designed for this user at that BS.
\item Any BS $j$ share its transmit power $P_j$ equally among the users it serves.
\end{enumerate}
\end{defi}

As explained in Sec. \ref{sec:analyze}, the properties of ALJTS enable us to decouple the problems scheduling and load balancing, since the instantaneous active-user rates depend only on the serving BS cluster and are predictable \emph{a priori} based on the given $S_j(L)$ value.

To illustrate these principles, we give LJT examples that obey these rules in Table \ref{sample-RBs}, involving clusters of sizes 1 (i.e., cellular transmission) and 2.
Four BSs are considered with $P_j\!=\!1$, $S_j(1)\!=\!2$ and $S_j(2)\!=\!3$.  As the table reveals, each BS on RB \#1 engages in cellular transmission. 
On RB \#2, pairs of BSs perform LJT with each BS pair serving a triplet of users. RBs \#3 and \#4 provide additional, more interesting modes.
On RB \#3, no user is served by the same cluster. On RB \#4,  BSs 1 and 2 serve users in clusters of size 2, while BSs 3 and 4 serve users in cellular transmission. Note that if orthogonal pilots are used, (at least) 8, 6, 6 and 7  uplink pilot dimensions (one dimension per user) are needed to enable RBs \#1, \#2, \#3 and \#4, respectively. Evidently, the choice of scheduled user sizes $S_j(\Csize)$ signifies how aggressively pilot dimensions are reused across the network (e.g., $S_j$ for fully reused pilots and $\Csize S_j$ for orthogonal pilots). Inspection of Table \ref{sample-RBs} reveals that the first ALJTS property can be satisfied either if all clusters at an RB are of the same size (RBs \#1,2,3) or if different size clusters are disjoint across BSs (RB \#4).


\begin{table}
\caption{Example of RBs enabled by LJT over 4 BSs. } 
\begin{center}
\begin{tabular}{|c||c|c|c|c|c|}
\hline
RB  & & BS 1 & BS 2& BS 3 & BS 4\\
\hline 
\hline
 & Cluster Size & 1 & 1 & 1 & 1\\
\hline
\#1 & User Power & 1/2 & 1/2 &  1/2 & 1/2\\
\hline 
& Served Users & 1,2 & 3,4 & 5,6 & 7,8 \\
\hline
\hline
 & Cluster Size & 2 & 2 & 2 & 2\\
\hline
\#2 & User Power & 1/3 & 1/3 &  1/3 & 1/3\\
\hline 
& Served Users & 1,2,3& 1,2,3 & 4,5,6& 4,5,6\\
\hline
\hline
 & Cluster Size & 2 & 2 & 2 & 2\\
\hline
\#3 & User Power & 1/3 & 1/3 &  1/3 & 1/3\\
\hline 
& Served Users & 1,2,3 & 1,4,5 & 2,4,6 & 3,5,6 \\
\hline
\hline
 & Cluster Size & 2 & 2 & 1 & 1\\
\hline
\#4 & User Power & 1/3 & 1/3 &  1/2 & 1/2\\
\hline 
& Served Users & 1,2,3 & 1,2,3 & 4,5 & 6,7\\
\hline
\end{tabular}
\end{center}
\label{sample-RBs}
\end{table}

Depending on the availability of bands and the preferences of the operator, in 5G HetNets, different groups of BSs might be allowed to jointly transmit in clusters across groups of RBs (e.g., across different frequency bands). Each such combination of BS clusters is considered separately as a distinct entity in the resource allocation problem. Given the set of BSs $\Bc$ in the network, there are $2^{|\Bc|}-1$ many possible different entity/operation options. Although in principle it is straightforward to take into account all these different options, for simplicity and considering the general interest in 5G HetNet deployments, we focus only on the following options: 1) macro and small cell layers may operate together; 2) only the macro layer operates; 3) only the small cell layer operates. We call these 3 different operations as \emph{shared, macro-only} and \emph{blanking operations}, respectively. Let $A\!\in\!\{1,2,3\}$ denote the operation type, where $A\!=\!1$, $A\!=\!2$ and $A\!=\!3$ denote shared, macro-only and blanking operations, respectively. Let RBs allocated to operation $A$ form Band-$A$, and the set of BSs that can transmit in Band-$A$ be $\Bc^{(\!A)}$. We have the following cases: \\
1) $A\!=\!1$:  shared operation is considered for this band, where macro and small cell BSs can both transmit, i.e., $\Bc^{(1)}\!=\!\Bc$. In this band, clusters can be formed by BSs from different layers. \\
2) $A\!=\!2$: macro-only operation is considered for this band, and only macros can transmit, i.e., $\Bc^{(2)}\!=\!\Bc_m$.  \\
3) $A\!=\!3$: blanking is applied to this band, where all of the macro BSs are muted, i.e., $\Bc^{(3)}\!=\!\Bc_s.$\footnote{In general, blanking can be applied to only a subset of of macro BSs.  In fact, \cite{LiuLau14} has shown gains compared to blanking all macro BSs. Our formulation can include partial blanking by considering new operation options. To simplify the exposition, in this work we confine ourselves to blanking all of the macro BSs.}
 
Different resource allocations among operations refer to different scenarios. For example, scenarios with $A\!\in\!\{2,3\}$ correspond to cases where orthogonal RBs are allocated to macro and small cells, while scenarios with $A\!\in\!\{1,3\}$ can be applied to cases with eICIC.  In some scenarios, resource allocation among operations can be fixed \textit{a priori}. In more flexible scenarios, resource allocation among operations can be a part of the optimization problem. Our formulation in Sec. \ref{sec:formulation} can be applied to both cases.

To show how these different transmission operations of practical interest can be considered within our specific LJT, we give a small example in Table \ref{sample-RBs-blanking}. 
As different bands may prefer different cluster sizes, we consider clusters up to size $L_{\max}^{(A)}$ in Band-$A$. Then the potential BS clusters that can be active in this band is given by the subsets of $\Bc^{(A)}$ with size less than or equal to $L_{\max}^{(A)}$.\footnote{Many of these clusters are not necessary. For example, clusters between BSs that are geographically distant are not necessary to consider, as in a practical system no user would be assigned to these clusters. This type of practical observations can eliminate many cluster options for all RBs. In the following, to avoid cumbersome notation, while listing potential clusters, we only use the cluster size and the active BS set of the corresponding band to describe potential BS clusters.}
Table \ref{sample-RBs-blanking} is extended from Table \ref{sample-RBs} by adding the macro BS (BS \#5) and considering other BSs as small cell BSs, i.e., $\Bc_m\!=\!\{5\}$ and $\Bc_s\!=\!\{1,2,3,4\}$. Assume $L_{\max}^{(1)}\!=\!2, L_{\max}^{(2)}\!=\!1$ and $L_{\max}^{(3)}\!=\!2$ in this example. 
RBs  $b$ and $d$ are in Band-$1$ (shared operation), RB $e$ is in Band-$2$ (macro only operation), and RBs $a$ and  $c$ are in Band-$3$ (blanking operation). 
In RB $b$, each BS is performing cellular transmission, while RB $d$ considers clusters of size 2 including both macro and small cell BSs.  In RB $e$, the macro BS (BS \#5) serves users via cellular transmission. 
In RBs $a$ and $c$,  only small cells serve users while the macro BS is muted. In fact, clusters in RBs $a$ and $c$ are the same as in  RBs \#1 and \#3 in Table \ref{sample-RBs}, respectively. 

\begin{table}
\caption{Example of RBs enabled by ALJTS over 4 small cell BSs  (BSs 1-4) and 1 macro BS (BS 5). } 
\begin{center}
\begin{tabular}{|c||c|c|c|c|c|c|}
\hline
RB  & & BS 1 & BS 2& BS 3 & BS 4& (Macro) BS 5\\
\hline 
\hline
 & Cluster Size & 1 & 1 & 1 & 1&-\\
\hline
$a$ & User Power & 1/2 & 1/2 &  1/2 & 1/2&-\\
\hline 
& Served Users & 1,2 & 3,4 & 5,6 & 7,8 &-\\
\hline
\hline
 & Cluster Size & 1 & 1 & 1 & 1&1\\
\hline
$b$ & User Power & 1/2 & 1/2 &  1/2 & 1/2&1/2\\
\hline 
& Served Users & 1,2 & 3,4 & 5,6 & 7,8 &9,10\\
\hline
\hline
 & Cluster Size & 2 & 2 & 2 & 2&-\\
\hline
$c$ & User Power & 1/3 & 1/3 &  1/3 & 1/3&-\\
\hline 
& Served Users & 1,2,3 & 1,4,5 & 2,4,6 & 3,5,6& -\\
\hline
\hline
 & Cluster Size & 2 & 2 & 2 & 2&2\\
\hline
$d$ & User Power & 1/3 & 1/3 &  1/3 & 1/2&1/3\\
\hline 
& Served Users & 1,2,3 & 1,4,5 & 2,4,6 & 3,7&5,6,7\\
\hline
\hline
 & Cluster Size & - & - & - &-&1\\
\hline
$e$ & User Power & - & - &  - & -&1/2\\
\hline 
& Served Users &-& - & - & - &1,2\\
\hline
\end{tabular}
\end{center}
\label{sample-RBs-blanking}
\end{table}

\begin{remark} For any band, potential BS clusters are given by subsets (of appropriate size) of the active BS set on that band. As noted earlier, some subsets can be eliminated given the topology and constraints in the network. At the scheduling slot scale, the active BS clusters are determined by the scheduler that is operated by the central controller.  Fig. 2 provides a flow diagram for the interactions between different entities, such as  BSs, users and central controller and includes a load balancing and a scheduling unit. These interactions between different entities and the processing performed within different entities are discussed in the Sections \ref{sec:analyze}, \ref{sec:vq-scheme} and \ref{sec:simulation}. 
\end{remark}

\begin{figure}
\centering
\includegraphics[width=6.59cm, height=4.05cm]{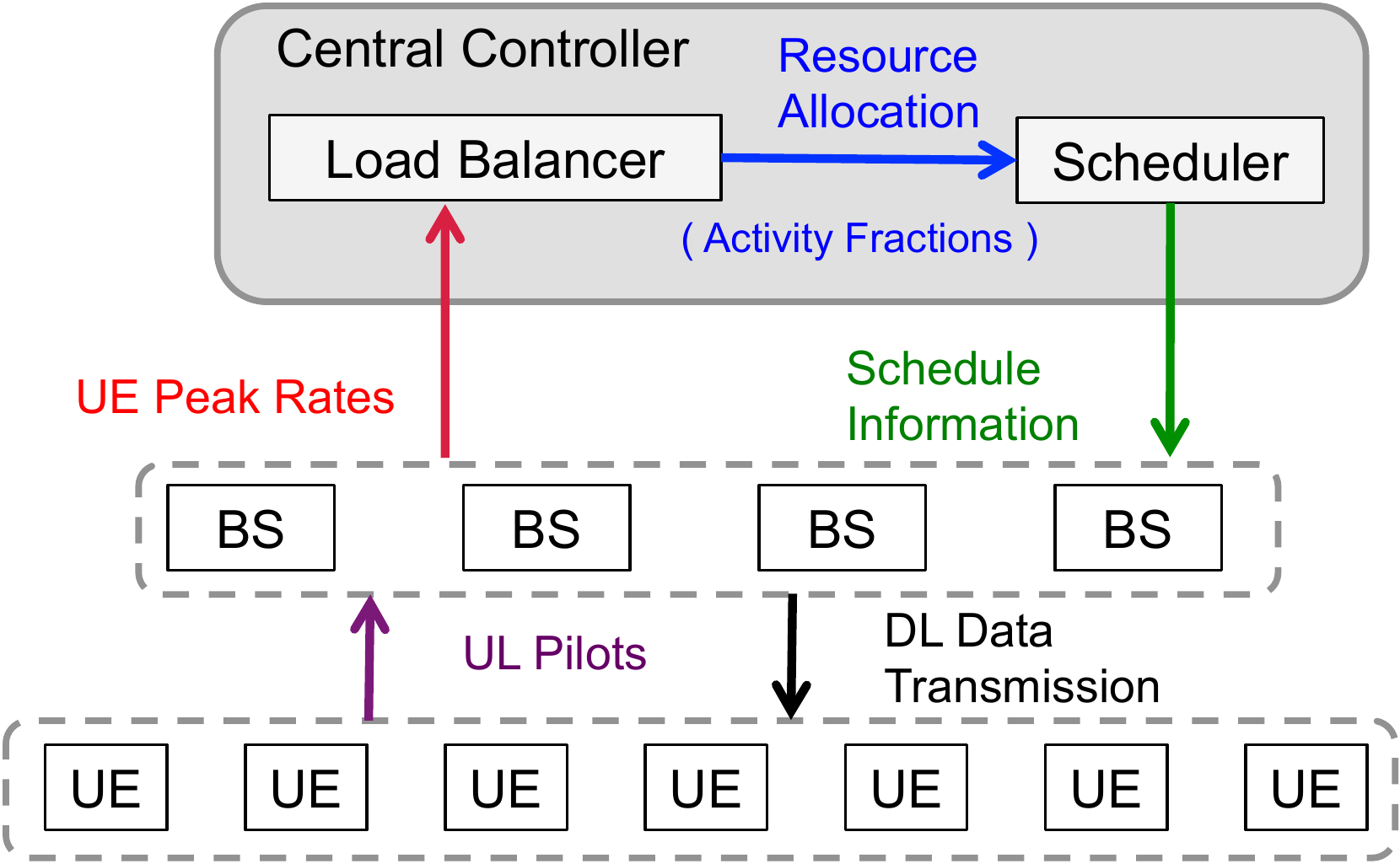}
\caption{Given the resource allocation done by the load balancer at the central controller, the scheduling is done at the scheduler unit in the central controller as shown in Fig. \ref{fig:flow-diagram}.}
\label{fig:flow-diagram}
\end{figure}

\section{Instantaneous and Long-term Rates}\label{sec:analyze}
In this section, we provide proxy expressions for instantaneous rates and long-term rates (throughput)  with either LZFBF or Maximum Ratio Transmission (MRT, also known as Conjugate Beamforming).
We denote the transpose and  conjugate transpose of matrices by $(\cdot)^T$ and $(\cdot)^H$, respectively. 

On a generic RB $t$, we let $g_{kj,i}=\sqrt{\beta_{kj}}h_{kj,i}$ be the channel between the $i^{\rm th}$ transmit antenna of BS $j$ and any user $k$, and it includes both slow fading $\beta_{kj}$ and fast fading~$h_{kj,i}$. The slow fading $\beta_{kj}$ characterizes the combined effect of  distance-based path loss and  location-based shadowing. Let $\Kc_j(t)$ be the set of users served by BS $j$ and $S_j(t) = |\Kc_j(t)|$ denote the number of users scheduled by BS $j$ on RB $t$. The channel matrix between BS $j$ and its active users (the users scheduled by this BS at this RB) is denoted by $\mathbf{G}_j$, where the dimension of $\mathbf{G}_j$ is $M_j\!\times\!S_j(t)$.  The $m^{\rm th}$ column of $\mathbf{G}_j$ 
corresponds to the channel of user $k = k_j(m)$ for some $k\in\{1,2,\ldots,\Kc\}$. That is, for a given $m$ we have 
$[\mathbf{G}_j]_{im} =  g_{kj,i}$, with $k = k_j(m)$.   This expression can also be interpreted in terms of the inverse mapping  $m=m_j(k)$, that is,  for a given $m$ we have 
$[\mathbf{G}_j]_{im} =  g_{kj,i}$, with  $m = m_j(k)$.

We assume each link experiences independent Rayleigh fading, i.e., $\mathbf{h}_{kj}=[h_{kj,1}, \cdots, h_{kj,M_j}]^T$ are complex Gaussian i.i.d. random variables.\footnote{The instantaneous user-rate expressions in this paper assume no spatial correlation in the user channels.  In principle, instantaneous user-rate expressions can be developed for spatially correlated user channels with a given spatial correlation structure by using the method of deterministic equivalent \cite{HoyTen13}. 
The framework presented in Secs.~\ref{sec:formulation}--\ref{sec:vq-scheme}, however,  is not directly applicable with spatially correlated user channels, since a user's instantaneous rate would in general depend on the spatial correlation of the other scheduled user channels. Although  beyond the scope of this paper,  extensions involving spatially correlated user channels is an area worth further investigation.} We let $\mathbf{F}_j$ denote the precoding matrix at BS $j$ with dimension $M_j\!\times\!S_j(t)$, whose $m^{\rm th}$ column $\mathbf{f}_{mj}$ is the beam (i.e., the precoding vector) for the $m^{\rm th}$ user of BS $j$, i.e.,  the user whose channel to the BS is given by the $m$-th column of $\mathbf{G}_j$. 
The signal symbol of user $k$ is denoted by $s_k$, where $s_k$ has unit energy. The thermal noise at user $k$ is denoted by $w_k$, which is assumed to be additive white Gaussian noise (AWGN)  with variance~$\sigma^2$.

We consider a scheduling policy on RBs  $\{1, \, 2\, \cdots, T\}$  and assume that all the large-scale coefficients stay fixed within this period. Any such scheduling policy can be described in terms of  the scheduling sets $\{ \Uc_\Cc(t);  \forall \Cc, \forall t \in \{1, \, 2\, \cdots, T\} \}$, where $\Uc_{\Cc}(t)$ denotes the set of users served by cluster $\Cc$ on RB~$t$ and $\forall \Cc$ denotes all of the possible cluster options considered for all bands taking into account active BS sets, cluster sizes and possible cluster selections/eliminations.
 Without loss of generality, we assume that RB $t$ is in Band-$A$ and $\Cc\!\subset\!\Bc^{(A)}$.
Thus, the received signal at user $k\in \Uc_{\Cc}(t)$ on RB $t$ is

\begin{equation}\label{eq:receivesig-0}
\begin{aligned}
y_{k}(t)= &\underbrace{\sum_{j\in \Cc} \sqrt{\frac{P_{j}}{S_{j}(|\Cc|)}} \mb{g}_{kj}(t)^H \mb{f}_{m_j(k)j}(t)s_k }_\text{desired} + \underbrace{ \sum_{j\in \Cc} \sum_{u\in\cup_{(\Cc':j\in\Cc')}\Uc_{\Cc'}(t), \; u\neq k}\sqrt{\frac{P_{j}}{S_{j}(|\Cc|)}} \mb{g}_{kj}(t)^H \mb{f}_{m_j(u)j}(t)s_u }_\text{intra-cluster interference} \\
&+\underbrace{\sum_{l\notin \Cc} \sum_{u\in \cup_{(\mathcal{C'}: l\in \mathcal{C'})} \Uc_\mathcal{C'}(t)}\sqrt{\frac{P_l}{S_{l}(|\mathcal{C'}|)}} \mb{g}_{kl}(t)^H \mb{f}_{m_\ell(u)l}(t)s_u }_\text{inter-cluster interference} + \underbrace{w_k}_\text{noise}.
\end{aligned}
\end{equation}


Adopting LZFBF, the precoding matrix at BS $j$ is $\mb{F}_j\!=\!\mb{G}_j\left(\mb{G}_j^H\mb{G}_j\right)^{-1}\mb{A}_j^{1/2}$, where $\mb{A}_j$ is the normalizing coefficients matrix. Specifically, $\mathbf{A}_j$ is a $S_j(L) \times S_j(L)$ diagonal matrix with the $k$th diagonal element being $a_{k,k} = \frac{1}{\left[\left( \mathbf{G}_j^H \mathbf{G}_j\right)^{-1}\right]_{k,k}}$, where $[\left( \mathbf{G}_j^H \mathbf{G}_j\right)^{-1}]_{k,k}$ denotes the $k$th row and $k$th column element of the matrix $\left( \mathbf{G}_j^H \mathbf{G}_j\right)^{-1}$. In this case, the intra-cluster interference is $0$. With MRT, the precoding matrix  at BS $j$ is $\mb{F}_j$ with the $m$th column being $\mb{f}_{mj}\!=\!\frac{\mb{g}_{k_j(m)j}}{\|\mb{g}_{k_j(m)j}\|}$.  Note that the set of interfering BSs depends on the operation band. 

\vspace{-0.4cm}
\subsection{Instantaneous Rate}
In this paper, we assume that each BS has  available perfect CSI regarding the user terminals it serves.\footnote{Massive MIMO rate calculations for general CoMP setting  with practical TDD UL training for CSI acquisition can be found in \cite{HuhTul12}. The instantaneous rate expression in this manuscript can be updated accordingly taking into account the pilot contamination term and the training overhead as a special case of \cite{HuhTul12}. This was done for the cellular case in \cite{BetBur14a}.} Let $S_j$ denote the number of users served by BS $j$ in cellular transmission, with $S_j\!\ll\!M_j$. Under mild assumptions on fading,  the user instantaneous rates on RB $t$, $r_{kj}(t)$,  can be predicted \emph{a priori} in the massive MIMO regime \cite{BetBur14a}. In particular, there exist deterministic quantities $\{r_{kj}\}$ such that  $r_{kj}(t) \stackrel{{\rm a.s.}}{\rightarrow} r_{kj}$,  $\forall k\!\in\!\Uc$ and $\forall j\!\in\!\Bc$,  as $M_j, S_j \!\rightarrow\!\infty$, with fixed  $v_j\!=\!S_j/M_j\!\geq\!0$  \cite{Mar10,HoyTen13,HuhCai11}. This convergence  is very fast with respect to $M_j$'s. 
Unlike general CoMP, where a user's instantaneous rate depends on the other users co-scheduled on the same RB  \cite{HuhTul12}, the ALJTS makes a user's instantaneous rate \emph{independent}  of  the other users in the scheduling set. 
Let $r_{k\Cc}^{(A)}(t)$ be the instantaneous rate of user $k$ from cluster $\Cc$ on RB $t$ in Band-$A$. There exist deterministic quantities $\{r_{k\Cc}^{(A)}\}$ such that $r_{k\Cc}^{(A)}(t) \stackrel{{\rm a.s.}}{\rightarrow} r_{k\Cc}^{(A)}$ as $M_j, S_j(|\Cc|)\!\rightarrow\!\infty$ with $v_j\!=\!S_j(|\Cc|)/M_j\!\geq\!0, \forall j\in\Cc$. 

Using the techniques in \cite{LimCha13,HuhTul12}, we can show that the approximate  instantaneous rate of user $k$ from cluster $\Cc\!\subset\!\Bc^{(A)}$ in Band-$A$ using LZFBF is
\begin{equation}\label{eq:rate-zf}
r_{k\Cc}^{(A)}=\log_2\left(1+\frac{\sum_{j\in \Cc} \sum_{l\in \Cc} \sqrt{P_{j}P_{l}\beta_{kj}\beta_{kl}b_{j}(|\Cc|)b_{l}(|\Cc|) }}{\sigma^2 + \sum_{l\notin \Cc, l\in\Bc^{(A)}} P_l \beta_{kl}}\right),
\end{equation}
where $b_{j}(a)=\frac{M_{j} - S_{j}(a)+1}{S_{j}(a)}$.
Similarly, the approximate instantaneous rate  of user $k$ from cluster $\Cc$ in Band-$A$ using MRT is
\begin{equation}\label{eq:rate-mrt}
r_{k\Cc}^{(A)}= \log_2\left(1+ \frac{\sum_{j\in \Cc}\sum_{l\in \Cc}\sqrt{\frac{P_j P_l M_j M_l \beta_{kj}\beta_{kl}}{S_{j}(|\Cc|)S_{l}(|\Cc|)}} }{\sigma^2 + I_{k\Cc}+ \sum_{l\notin \Cc, l\in \Bc^{(A)}} P_l\beta_{kl}} \right),
\end{equation}
where $I_{k\Cc}=\sum_{j\in \Cc} \frac{(S_{j}(|\Cc|)-1)}{S_{j}(|\Cc|)} P_j\beta_{kj} $ is the non-zero intra-cluster interference. 
Clearly, $r_{k\Cc}^{(A)}\!=\!0$ if $\Cc\!\not\subset\!\Bc^{(A)}$.


Eqs.~(\ref{eq:rate-zf}) and (\ref{eq:rate-mrt}) assume that  $\forall j\!\in\!\Cc$, BS $j$ serves $S_j(|\Cc|)$ users and allocates $P_j/S_j(|\Cc|)$ fraction of its power to each user. In the case that fewer users are served  by one of the BSs, (\ref{eq:rate-zf}) and (\ref{eq:rate-mrt})  represent achievable lower-bound instantaneous rates. 


It is worth noting that LJT provides instantaneous-rate improvements at the cell-edge with respect to cellular, as it replaces the  intra-cell BF gain provided by the cellular schemes with intra-cluster BF gain. Also, given that macro BSs do not transmit in Band-$3$ (i.e., the blanking operation), 
users in small cells benefit from larger SINRs in Band-$3$, as there is no interference from macro BSs to small cell users in this band. 

\vspace{-0.2cm}
\subsection{Long-term Rates}\label{sec:longtermR}
\vspace{-0.2cm}
As discussed in \cite{BetBur14a},  users can be served (at distinct scheduling instances) by more than one BS in massive MIMO networks in cellular transmission.
Similarly, users can be served by different clusters on different RBs in ALJTS. 
%
Let  $x_{k\Cc}^{(A)}\!=\!\lim_{T \rightarrow \infty} \frac{| \{ t: 1\leq t\leq T, \ k \in \Uc_{\Cc}^{(A)}(t), \  t  \textrm{ in Band-}A \}| }{T}$ denote the activity fraction of user $k$ to cluster $\Cc$ in Band-$A$, that is, the fraction of resources allocated by cluster $\Cc$ to user $k$ in  Band-$A$. The activity fraction $x_{k\Cc}^{(A)}$ is a real number showing the fraction of RBs (averaged over many slots within which the load balancing is considered) where user-$k$ is served by cluster-$\Cc$ in Band-$A$.  If $x_{k\Cc}^{(A)}\!>\!0$,  user $k$ is served by cluster $\Cc$ in Band-$A$. We obtain the long-term rate similar to  \cite{BetBur14a}, using instantaneous rates and  activity fractions from the scheduling policy.  In particular, in the limit $T\to \infty$, the long-term rate of user $k$  equals\footnote{Convergence to the limiting expressions of interest is very quick \cite{BetBur14a}.}
\begin{equation}
\label{avgtp-vs-actfracs}
R_k = \sum_{A=1}^{3}\sum_{
\substack{
\Cc: \Cc\subset\Bc^{(A)} \\
|\Cc|\leq L_{\max}^{(A)}}
} x_{k\Cc}^{(A)}\, r_{k\Cc}^{(A)}.
\end{equation}

\section{Unified NUM Problem Formulation}\label{sec:formulation}
While ALJTS allows for varying cluster sizes within an RB as revealed in Table \ref{sample-RBs}, in the sequel we specialize to the tractable case of practical interest involving equal-size clusters on each RB. Consequently, the following framework allows for cluster options as seen in RBs \#1-3 but not for RB \#4 in Table \ref{sample-RBs}. We call this new, additionally constrained, scheme the Uniform Cluster-Size scheme (UCS).\footnote{ Unlike UCS, when different size clusters are allowed to operate together, great care in the scheduling design must be given to avoid overlapping clusters with different sizes to operate in the same RB. This makes scheduling very complicated. The considered UCS provides a lower bound on the performance compared to more general ALJTS, which can serve a useful benchmark.}
\begin{defi}\label{first-arch}
{\bf Uniform Cluster-Size Scheme (UCS)}: 
ALJTS is a UCS if 
\begin{enumerate}[(1)]
\item $\mu_A$ fraction of RBs is allocated to Band-$A$, with $\sum_{A} \mu_A\leq 1$;
\item For each Band-$A$, $\lambda_{AL}$ fraction of RBs is allocated to  size-$L$ clusters for $1\leq L\leq L_{\max}^{(A)}$, with $\sum_{L=1}^{L_{\max}^{(A)}} \lambda_{AL}\leq \mu_A$;
\item on any RB in the  $\lambda_{AL}$ fraction,  the scheduled users are served by  (user-dependent) clusters of the same size $L$ and these clusters are formed by BSs in $\Bc^{(A)}$;
\item on any RB in the  $\lambda_{AL}$ fraction, each BS does not serve more than $S_j(L)$ users.
\end{enumerate}
\end{defi}
Then LJT designs considered in the rest of the paper are all \emph{Uniform Cluster-Size Schemes}. RBs allocated to serving size-$L$ clusters in Band-$A$ comprise what we call the $L^{\rm th}$ subband of Band-$A$. The NUM problem for the UCS optimizes activity fractions and subband/band  allocations is as follows:
\begin{subequations}\label{eq:opt-cvx}
\begin{align}
\max\limits_{\lambda_{A\Csize}, x_{k\Cc}^{(A)},\mu_A} \ & \sum_{k\in\mathcal{U}} U\left( \sum_{A=1}^{3} \sum_{\substack{
\Cc: \Cc\subset \Bc^{(A)},  \\ |\Cc|\leq L_{\max}^{(A)}}} x_{k\Cc}^{(A)}r_{k\Cc}^{(A)}\right)\\
\text{s.t. } &\sum_{\substack{\Cc: \Cc\subset \Bc^{(A)}, \\ j\in \Cc, |\Cc|=\Csize}}\frac{\sum_{k\in\mathcal{U}} x_{k\Cc}^{(A)}}{S_j(\Csize)} \leq \lambda_{AL}, \ \forall j\in \Bc^{(A)}, \forall L \leq L_{\max}^{(A)}, \forall A, \label{eq:opt-ct-cluster-cvx}\\
& \sum_{\Cc: |\Cc| = L, \Cc\subset \Bc^{(A)}} x_{k\Cc}^{(A)} \leq \lambda_{AL}, \ \forall k \in \Uc, \forall L \leq L_{\max}^{(A)}, \forall A, \label{eq:opt-ct-ue-cvx}\\
&\sum_{L=1}^{L_{\max}^{(A)}} \lambda_{AL} \leq \mu_A, \forall A,\label{eq:opt-ct-sumArch-cvx}\\
&\sum_{A=1}^3 \mu_A \leq 1,\label{eq:opt-ct-mu-cvx}\\
& x_{k\Cc}^{(A)}, \lambda_{AL}, \mu_A \geq 0, \ \forall k \in \Uc, \forall \Cc, \forall L\leq L_{\max}^{(A)}, \forall A,\label{eq:opt-ct-positive-cvx}
\end{align}
\end{subequations}
where the utility function $U(\cdot)$ is a continuously differentiable, monotonically increasing, and strictly concave function \cite{StaWic09}. Constraint (\ref{eq:opt-ct-cluster-cvx})  signifies that the total  activity fractions allocated by BS $j$ in clusters of size $\Csize$ in Band-$A$ cannot exceed the total available resources $\lambda_{A\Csize} S_j(\Csize)$. 
On the other hand, recalling that each user cannot be served by multiple clusters on the same RBs, (\ref{eq:opt-ct-ue-cvx})  signifies that the fraction of RBs over which user $k$ is served by clusters of size $\Csize$ in Band-$A$ cannot exceed RBs allocated to the  clusters of size $\Csize$ in Band-$A$, $\lambda_{A\Csize}$. (\ref{eq:opt-ct-sumArch-cvx}) ensures that the total resources allocated to the subbands in Band-$A$ are no more than the resources allocated to that band. Finally, (\ref{eq:opt-ct-mu-cvx}) signifies the fact that the summation of resources allocated to different bands is equal to all available resources.

\begin{remark} 
The formulation (\ref{eq:opt-cvx}) is quite flexible. It can be applied not only to the scenarios that optimize the resource allocation among operations, but also to the scenarios where resources given to each operation are fixed \emph{a priori} by setting the corresponding $\mu_A$ values to constants in (\ref{eq:opt-cvx}). The cellular transmission \cite{BetBur14a} can be recovered as a special case of (\ref{eq:opt-cvx}) by setting $A = 1$ and $L_{\max}(1) = 1$.
\end{remark}

Any concave function (e.g., general $\alpha$-fairness, \cite{MoWal00}) can be applied to the problem formulation (\ref{eq:opt-cvx}). The formulated optimization is convex as long as the utility function is concave \cite{Ber09}. In this paper, among various concave utility functions available in the literature, we work with the logarithmic utility which is also known as ``proportional fairness'' \cite{YeAnd13LB,DebMon14,BetBur14a}.\footnote{There are many options for the utility function such as (weighted) arithmetic mean, geometric mean, max-min fairness. Each option can be relevant for certain scenario and interest. Both arithmetic mean and max-min fairness have their shortcomings \cite[Chapter 1]{Bjornson13}. Geometric mean (aka proportional fairness), promotes a trade-off of user-rates between the other two utility functions. The ``$\log$'' function in the utility ensures diminishing returns for individual users rates as they get higher. This de-motivates the optimization to give high rates to a few users. } General numerical solvers (e.g., CVX) can be used. Since CVX is not well-suited for large instances~\cite{cvx}, we alternatively propose an efficient algorithm that can be applied to large networks in the next section and the complexity difference between the proposed algorithm and a general numerical solver is investigated in Appendix \ref{app:complexity}.

\section{Dual Subgradient Based Algorithm}\label{sec:dual-algo}
In this section, we propose an efficient algorithm based on the dual subgradient method \cite{Ber09}. 
We let $\nu_{jL}^{(A)}$ and $\theta_{kL}^{(A)}$ be the Lagrange multipliers corresponding  to (\ref{eq:opt-ct-cluster-cvx}) and (\ref{eq:opt-ct-ue-cvx}), respectively. 
The dual problem of (\ref{eq:opt-cvx}) is
$\min\limits_{\nu_{jL}^{(A)}, \theta_{kL}^{(A)} \geq 0} \ \sum_{k\in\mathcal{U}} f_k\left(\nu_{jL}^{(A)}, \theta_{kL}^{(A)}\right) + g\left(\nu_{jL}^{(A)}, \theta_{kL}^{(A)}\right),$
where 
\begin{equation}\label{eq:opt-dual-UE}
\begin{aligned}
f_k\left(\nu_{jL}^{(A)}, \theta_{kL}^{(A)}\right) =\max\limits_{x_{k\Cc}^{(A)}\geq 0} \  &\log\left(\sum_{A=1}^3 \sum_{\substack{
\Cc: \Cc\subset \Bc^{(A)},\\ |C|\leq L_{\max}^{(A)}}}x_{k\Cc}^{(A)}r_{k\Cc}^{(A)} \right) - \sum_{A=1}^3\sum_{L=1}^{L_{\max}^{(A)}}\sum_{\substack{\Cc:\Cc\subset\Bc^{(A)},\\ |\Cc|=L}}\sum_{j: j\in\Cc} \frac{\nu_{jL}^{(A)}}{S_j(L)}x_{k\Cc}^{(A)}\\
& - \sum_{A=1}^3\sum_{L=1}^{L_{\max}^{(A)}}\theta_{kL}^{(A)} \sum_{\Cc:\Cc\subset\Bc^{(A)}, |\Cc|=L} x_{k\Cc}^{(A)},
\end{aligned}
\end{equation}
and 
\begin{equation}\label{eq:opt-dual-lambda}
g(\nu_{jL}^{(A)}, \theta_{kL}^{(A)}) =
\max\limits_{\substack{\sum_{L=1}^{L_{\max}^{(A)}}\lambda_{AL}\leq \mu_A,\\ \sum_{A=1}^3 \mu_A\leq 1} } \sum_{A=1}^3\sum_{L=1}^{L_{\max}^{(A)}}\left( \sum_{j: j\in \Bc^{(A)}} \nu_{jL}^{(A)}  + \sum_{k\in\Uc}\theta_{kL}^{(A)}\right)\lambda_{AL}.
\end{equation}
The constraints of (\ref{eq:opt-cvx}) satisfy the Slater condition \cite{Ber09}, and thus strong duality holds (i.e., the dual problem  and the original problem (\ref{eq:opt-cvx}) have the same optimal value). 

\vspace{-0.4cm}
\subsection{The Dual Subgradient Method}
The optimization problem (\ref{eq:opt-dual-UE}) has the closed-form optimal solution
\begin{equation}\label{eq:dual-x}
x_{k\Cc}^{(A)} =
\begin{cases}
\frac{1}{\sum_{L:L=|\Cc|}\left(\sum_{j:j\in\Cc}\nu_{jL}^{(A)}/S_j(L)+\theta_{kL}^{(A)}\right)}, & \text{if } \{\Cc,A\} = \{\Cc^*, A^*\},\\
0, & \text{otherwise},
\end{cases}
\end{equation}
where $\{\Cc^*, A^*\}=\arg\max_{\Cc, A}  \frac{r_{k\Cc}^{(A)}}{\sum_{L:L=|\Cc|}\left(\sum_{j:j\in\Cc}\nu_{jL}^{(A)}/S_j(L)+\theta_{kL}^{(A)}\right)}.$\footnote{If we have multiple pairs of $\{\Cc^*,A^*\}$, we just randomly pick one pair.}

The problem (\ref{eq:opt-dual-lambda}) is an LP and one optimal solution is\footnote{If we have multiple $\{A,L\}$ pairs that maximize the $\sum_{j: j\in \Bc(A)} \nu_{jL}^{(A)}  + \sum_{k\in\Uc}\theta_{kL}^{(A)}$, we just randomly pick one.}
\begin{equation}\label{eq:lambdaopt-dualalgo}
\lambda_{AL} = \left\{
\begin{aligned}
& 1, \text{ if } \{A,L\}= \arg\max_{A',L'} \sum_{j: j\in \Bc(A')} \nu_{jL'}^{(A')}  + \sum_{k\in\Uc}\theta_{kL'}^{(A')}, \\
&0, \text{ otherwise},
\end{aligned}\right.
\end{equation}
\begin{equation}\label{eq:muopt-dualalgo}
\mu_A = \left\{
\begin{aligned}
& 1, \text{ if there exists a band $A$ such that the above } \lambda_{AL}>0, \\
&0, \text{ otherwise}.
\end{aligned}\right.
\end{equation}

The $t$th iteration of the  algorithm is as follows.
\begin{enumerate}
\item Update the activity fractions by (\ref{eq:dual-x}).
\item  Update resource allocation for different bands and clusters by (\ref{eq:lambdaopt-dualalgo}) and (\ref{eq:muopt-dualalgo}). 
\item Update the Lagrangian multipliers by 
\begin{equation}\label{eq:dual-nu}
\nu_{jL}^{(A)}(n+1) = \left[\nu_{jL}^{(A)}(n) - \delta(n) \left(\lambda_{AL}(n) - \sum_{\substack{\Cc: \Cc\subset \Bc(A),\\ j\in \Cc, |\Cc|=L}}\frac{\sum_{k\in\mathcal{U}} x_{k\Cc}^{(A)}(n) }{S_{j}(\Csize)}\right)\right]^+,
\end{equation}
and 
\begin{equation}\label{eq:dual-theta}
\theta_{kL}^{(A)}(n+1) = \theta_{kL}^{(A)}(n) - \delta(n) \left(\lambda_{AL}(n) - \sum_{\Cc: \Cc\subset\Bc(A), |\Cc|=L} x_{k\Cc}^{(A)} \right),
\end{equation}
where  $[z]^+=\max\{z, 0\}$ and $\delta(n)$ is the stepsize at the $n^{\rm th}$ iteration.
\end{enumerate}

By adding redundant constraints $x_{k\Cc}^{(A)}\!\leq\! 1$ and choosing an appropriate stepsize (e.g, a diminishing stepsize $\delta(n)\!=\!\frac{a}{n+b}$, where $a$ and $b$ are some positive scalars), the subgradients can be bounded. This allows us leveraging Prop. 6.3.4. in \cite{Ber09} to show the convergence of the dual subgradient algorithm.  The detailed steps for the algorithm with redundant constraints can be found in Appendix \ref{pf:dual-algo}. 


\vspace{-0.4cm}
\subsection{Finding the Optimal Primal Solutions Given the Optimal Dual Variables} \label{sec:dual-to-primal}
Note that the objective function of (\ref{eq:opt-cvx}) is not strictly convex and we may have multiple optimal solutions. In this case, given the optimal dual variables, it is generally difficult to find the optimal primal solutions that satisfy the KKT conditions. However, by exploring the structure of (\ref{eq:opt-cvx}) as follows, we propose to obtain the optimal primal solutions by solving a small-size LP. 

The optimal long-term rate $R_k^*\!=\!\sum_{A=1}^3\sum_{\Cc:\Cc\subset\Bc^{(A)}}x_{k\Cc}^{*(A)}r_{k\Cc}^{(A)}$ in (\ref{eq:opt-cvx}) is unique, since the function $\log(R_k)$ is strictly concave with respect to $R_k$. KKT conditions of problem (\ref{eq:opt-cvx}) imply
\begin{equation}\label{eq:kkt-Rk}
R_k \geq \frac{r_{k\Cc}^{(A)}}{\sum_{j:j\in\Cc}\nu_{j|\Cc|}^{(A)}/S_j(|\Cc|)+\theta_{k|\Cc|}^{(A)}}.
\end{equation}
Thus,  given the optimal dual variables, the unique optimal rate can be easily  obtained by  $R_k^*=\max_{\Cc,A} \left\{\frac{r_{k\Cc}^{(A)}}{\sum_{j:j\in\Cc}\nu_{j|\Cc|}^{(A)}/S_j(|\Cc|)+\theta_{k|\Cc|}^{(A)}}\right\}$. We observe from (\ref{eq:kkt-Rk}) that  in the optimal solutions, each user only has positive activity fractions $x_{k\Cc}^{(A)}$ to  clusters providing the maximum term of the right-hand side of (\ref{eq:kkt-Rk}). 
Based on this conclusion, we propose the following LP, whose size is reduced by only focusing on the positive $x_{k\Cc}^{(A)}$ obtained from~(\ref{eq:kkt-Rk}).
\begin{equation}\label{eq:opt-LP}
\begin{aligned}
\max_{\eta, x, \lambda} \ \   & \eta \\
\text{s.t. } &  \eta \leq   \sum_{A=1}^3 \sum_{\Cc\subset\Bc^{(A)}} \frac{ x_{k\Cc}^{(A)} r_{k\Cc}^{(A)} }{R_k^*},\ \forall k\in\mathcal{U},\\
&(\ref{eq:opt-ct-cluster-cvx})-(\ref{eq:opt-ct-positive-cvx}).
\end{aligned}
\end{equation}
\begin{prop}\label{prop:primal-optimal}
Given that $R_k^*$ is the exact optimal rate of (\ref{eq:opt-cvx}), the solution of (\ref{eq:opt-LP}) is the same as the optimal solution of problem (\ref{eq:opt-cvx}). 
\end{prop}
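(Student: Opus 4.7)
The plan is to prove Proposition \ref{prop:primal-optimal} by showing two things: that the LP (\ref{eq:opt-LP}) is feasible with objective value $\eta=1$, and that any feasible solution cannot achieve $\eta>1$. Together these pin down $\eta^{\,*}=1$, which will force any LP-optimum to realize the unique long-term rates $\{R_k^*\}$ and therefore to be optimal for (\ref{eq:opt-cvx}) as well.

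First I would establish feasibility. Let $(x^*,\lambda^*,\mu^*)$ be any optimal solution of (\ref{eq:opt-cvx}). By construction, $R_k^{\,*}=\sum_{A,\Cc} x^{*\,(A)}_{k\Cc}\,r^{(A)}_{k\Cc}$, so dividing through by $R_k^{\,*}$ gives $\sum_{A,\Cc} x^{*\,(A)}_{k\Cc}\,r^{(A)}_{k\Cc}/R_k^{\,*}=1$ for every $k$. Hence $(x^*,\lambda^*,\mu^*,\eta=1)$ satisfies every constraint of (\ref{eq:opt-LP}), which uses the same feasibility set (\ref{eq:opt-ct-cluster-cvx})--(\ref{eq:opt-ct-positive-cvx}) as the original problem. (The reduction in problem size, which restricts $x^{(A)}_{k\Cc}$ to the clusters achieving the maximum in (\ref{eq:kkt-Rk}), is consistent with this because the KKT conditions already force $x^{*\,(A)}_{k\Cc}=0$ away from those maximizers, so $x^*$ itself lies in the reduced domain.)

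Next I would show $\eta^{\,*}\le 1$ by contradiction. Suppose a feasible $(\hat x,\hat\lambda,\hat\mu,\hat\eta)$ with $\hat\eta>1$ existed. Then for every user $k$,
\begin{equation*}
\hat R_k \;\triangleq\; \sum_{A=1}^{3}\sum_{\Cc\subset \Bc^{(A)}} \hat x^{(A)}_{k\Cc}\, r^{(A)}_{k\Cc} \;\ge\; \hat\eta\, R_k^{\,*} \;>\; R_k^{\,*}.
\end{equation*}
Since $(\hat x,\hat\lambda,\hat\mu)$ satisfies (\ref{eq:opt-ct-cluster-cvx})--(\ref{eq:opt-ct-positive-cvx}), it is feasible for (\ref{eq:opt-cvx}), and by strict monotonicity of $\log$ we would get $\sum_k \log \hat R_k > \sum_k \log R_k^{\,*}$, contradicting the optimality of $\{R_k^{\,*}\}$. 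Combined with Step~1, this gives $\eta^{\,*}=1$.

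Finally I would conclude. At any LP-optimum $(\hat x,\hat\lambda,\hat\mu)$ with $\eta^{\,*}=1$, each $\hat R_k \ge R_k^{\,*}$. If any inequality were strict, the same monotonicity argument would contradict the optimality of the original problem; hence $\hat R_k = R_k^{\,*}$ for all $k$. Because the logarithmic objective of (\ref{eq:opt-cvx}) depends only on $\{R_k\}$ and the constraints are identical, $(\hat x,\hat\lambda,\hat\mu)$ is also optimal for (\ref{eq:opt-cvx}). The only delicate step is verifying that the domain reduction built into (\ref{eq:opt-LP}) does not remove any optimal solution of (\ref{eq:opt-cvx}); this is resolved by reading off from the KKT condition (\ref{eq:kkt-Rk}) that at optimality, positive activity fractions must already be supported on the ``maximizing'' cluster/band pairs, so the reduced LP is lossless.
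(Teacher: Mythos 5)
Your proof is correct, and it is essentially the argument the paper invokes by citing Lemma~1 of \cite{BetBur14a}: show the optimal solution of (\ref{eq:opt-cvx}) is LP-feasible with $\eta=1$, rule out $\eta>1$ via monotonicity of the log utility, and conclude that any LP optimum must realize the unique rates $R_k^*$ and hence solve (\ref{eq:opt-cvx}), with the KKT condition (\ref{eq:kkt-Rk}) guaranteeing the support restriction is lossless. Since the paper gives no further detail, your write-up simply fills in the same standard max-min LP argument.
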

\begin{proof}
Similar techniques in the proof of Lemma 1 in \cite{BetBur14a} can be used to complete this proof.
\end{proof}
Prop. \ref{prop:primal-optimal} implies that we can obtain the solutions of (\ref{eq:opt-cvx}) given the optimal dual variables. Though we can show the convergence of the dual subgradient algorithm by adding redundant constraints, there may exist a small gap between the obtained dual variables and the optimal ones, due to the numerical precision or the limit on the number of iterations. Exploiting the well-behaved structure of~(\ref{eq:opt-LP}), i.e., finite coefficients and a bounded feasible set \cite{BetBur14a}, it is expected that the solution of (\ref{eq:opt-LP}) is near optimal in the presence of a small gap between the obtained dual variables and the optimal ones.

Empirical evidence reveals that in a heavily loaded network, where constraints~(\ref{eq:opt-ct-ue-cvx}) are  inactive (i.e., $\sum_{ \substack{\Cc: |\Cc| = L, \\ \Cc\subset \Bc^{(A)}}} x_{k\Cc}^{(A)}\!<\!\lambda_{AL}$), most users are uniquely served by one cluster on each subband. 
Insight regarding this observation can be obtained by examining KKT conditions of (\ref{eq:opt-cvx}) as follows.

\begin{prop}\label{prop:uniqueass}
For a given Band-$A$ and a cluster size $L$, if (\ref{eq:opt-ct-ue-cvx}) are inactive $\forall k\in\mathcal{U}$, the number of users that are served by multiple BS clusters on RBs allocated to $L^{\rm th}$ subband of Band-$A$ is at most $N_{\Cc L}^{(A)}\!-\!1$, where $N_{\Cc L}^{(A)}$ is the number of clusters in the $L^{\rm th}$ subband of Band-$A$.
\end{prop}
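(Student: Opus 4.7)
The plan is to combine the KKT conditions of (\ref{eq:opt-cvx}) with a linear-dependence argument at a basic feasible solution (BFS) of the reduced LP (\ref{eq:opt-LP}), and close with a forest bound on the bipartite user--cluster graph, in the spirit of the proof of Lemma~1 in \cite{BetBur14a}.

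First I would record the KKT stationarity condition for $x_{k\Cc}^{(A)}$ with multipliers $\nu_{jL}^{(A)}$ for (\ref{eq:opt-ct-cluster-cvx}) and $\theta_{kL}^{(A)}$ for (\ref{eq:opt-ct-ue-cvx}). Complementary slackness yields $U'(R_k^*)\,r_{k\Cc}^{(A)}\le\sum_{j\in\Cc}\nu_{jL}^{(A)}/S_j(L)+\theta_{kL}^{(A)}$, with equality whenever $x_{k\Cc}^{(A)}>0$. The hypothesis that (\ref{eq:opt-ct-ue-cvx}) is inactive for every user forces $\theta_{kL}^{(A)}=0$, leaving the clean identity $U'(R_k^*)\,r_{k\Cc}^{(A)}=\nu_\Cc$, where $\nu_\Cc:=\sum_{j\in\Cc}\nu_{jL}^{(A)}/S_j(L)$, for every $(k,\Cc)$ with $x_{k\Cc}^{(A)}>0$. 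This ratio identity is the key algebraic input.

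Next, I would select an optimal primal realized as a BFS of the reduced LP (\ref{eq:opt-LP}) and form the bipartite graph $G_{A,L}$ whose left and right vertices are the users and the size-$L$ clusters of Band-$A$, with an edge $(k,\Cc)$ iff $x_{k\Cc}^{(A)}>0$; the multi-cluster users are exactly the left vertices of degree at least~$2$. Assume for contradiction that $G_{A,L}$ contains a cycle $k_1\!-\!\Cc_1\!-\!k_2\!-\!\Cc_2\!-\!\cdots\!-\!k_m\!-\!\Cc_m\!-\!k_1$, so $2m$ distinct basic columns appear. I would exhibit a non-trivial linear combination of these columns that vanishes identically, contradicting the linear independence of basic columns in a BFS. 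Assign coefficient $c_i$ to the column of $x_{k_i\Cc_i}^{(A)}$ and $-c_i$ to the column of $x_{k_{i+1}\Cc_i}^{(A)}$ (with $k_{m+1}:=k_1$), choosing the $c_i$'s so that the rate row of user $k_i$ cancels: $c_i\,r_{k_i\Cc_i}^{(A)}=c_{i-1}\,r_{k_i\Cc_{i-1}}^{(A)}$ (with $\Cc_0:=\Cc_m$). By the KKT identity this recursion is $c_i/c_{i-1}=\nu_{\Cc_{i-1}}/\nu_{\Cc_i}$, whose product around the cycle telescopes to $1$, so the recursion closes consistently and admits a non-trivial periodic solution. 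The BS-capacity rows cancel automatically: for each BS $j\in\Cc_i$ the two cycle columns incident to $\Cc_i$ contribute $+c_i/S_j(L)$ and $-c_i/S_j(L)$, which sum to zero. Hence the $2m$ cycle columns are linearly dependent---a contradiction---so $G_{A,L}$ is a forest.

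With acyclicity in hand the counting is immediate. Let $|\mathcal{M}|$ be the number of multi-cluster users, $|\mathcal{U}_1|$ the number of single-cluster users, and $n_C\le N_{\Cc L}^{(A)}$ the number of clusters appearing in $G_{A,L}$. The forest bound $|E(G_{A,L})|\le|\mathcal{M}|+|\mathcal{U}_1|+n_C-1$, together with the degree count $|E(G_{A,L})|\ge 2|\mathcal{M}|+|\mathcal{U}_1|$, gives $|\mathcal{M}|\le n_C-1\le N_{\Cc L}^{(A)}-1$. The main obstacle is engineering the vanishing linear combination in the previous paragraph: the BS-row cancellations follow from cycle incidence, but the rate-row cancellations hinge on the KKT ratios telescoping around the cycle, which only closes because $\theta_{kL}^{(A)}=0$---i.e., exactly the inactivity hypothesis assumed in the proposition.
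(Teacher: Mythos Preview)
Your approach is essentially correct and takes a genuinely different route from the paper. The paper does not use a BFS/linear-independence argument; it argues by \emph{genericity}. From the same KKT identity you derived (with $\theta_{kL}^{(A)}=0$), if two users $k,m$ were both served by two size-$L$ clusters $\Cc_1,\Cc_2$ one would need $r_{k\Cc_1}^{(A)}/r_{k\Cc_2}^{(A)}=r_{m\Cc_1}^{(A)}/r_{m\Cc_2}^{(A)}$, an event of probability zero over the random large-scale gains; a similar measure-zero argument rules out longer ``colorful'' cycles. The paper then builds a graph $G_2$ whose nodes are \emph{clusters} (not users), with an edge whenever two clusters share a fractional user; the genericity argument forces $G_2$ to be a tree, whence the $N_{\Cc L}^{(A)}-1$ bound. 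Your route instead picks a vertex of the LP~(\ref{eq:opt-LP}) and shows the bipartite user--cluster support graph is a forest by a column-dependence contradiction. This buys you a \emph{deterministic} statement---no appeal to continuous channel randomness---at the price of proving the bound only for \emph{some} optimal solution (a BFS), whereas the paper's argument covers every optimal solution almost surely. For the downstream use (feeding a sparse optimum into the scheduler) your version is arguably the more directly useful one.

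One gap to patch: your linear combination of the $2m$ cycle columns does \emph{not} vanish on the rows coming from~(\ref{eq:opt-ct-ue-cvx}); the row for user $k_i$ picks up $c_i-c_{i-1}$, which is generally nonzero since $c_i\propto 1/\nu_{\Cc_i}$ and the $\nu_{\Cc_i}$ need not coincide. This is precisely where the inactivity hypothesis rescues you: the slack variables of the~(\ref{eq:opt-ct-ue-cvx}) constraints are strictly positive, hence basic, so you may absorb these residuals by adding the corresponding slack columns (unit vectors in those rows) with coefficients $c_{i-1}-c_i$. The augmented combination then vanishes on all rows of the standard-form system and remains nontrivial on the $x$-columns, giving the desired contradiction. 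Make this step explicit.
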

\begin{proof}
See Appendix~\ref{pf:prop-uniqueass}.
\end{proof}

Prop. \ref{prop:uniqueass} implies that the optimal user associations in each subband are mostly unique.  We call the users served by more than one cluster on any subband as ``fractional users''. Note that Prop. \ref{prop:uniqueass} provides an upper bound (i.e., $N_{\Cc L}^{(A)}$) on the number of fractional users, while simulations show a much smaller number of fractional users (less than $3.5\% K$ in Sec. \ref{sec:simulation}). 
Recall that the dual subgradient algorithm determines the set of positive $x_{k\Cc}^{(A)}$ of users to their cluster-band pairs $\{\Cc^*,A^*\}$,  
while the rest of activity fractions are zero. Thus, unknown activity fractions that needs to be solved via  (\ref{eq:opt-LP}) are only the positive activity fractions. Based on Prop. \ref{prop:uniqueass}, most users (with unique association) have at most one positive activity fraction on any subband. 
Thus, the size of (\ref{eq:opt-LP}) is significantly reduced, implying the efficiency of the proposed algorithm. Further details on the algorithm complexity can be found in Appendix \ref{app:complexity}.

In summary, Proposition 1 has revealed the optimality of the specific method proposed in Sec. \ref{sec:dual-to-primal} to obtain the primal variables given the dual variables. Furthermore, the analysis of the number of iterations required for convergence of the proposed algorithm and its complexity reveal the efficiency of the proposed  algorithm with respect to its application to large network instances. Unlike the cellular case \cite{BetBur14a}, it is not \emph{a priori} known whether the NUM solution can be implemented via any scheduler or not. The implementation of NUM solutions is discussed below. 
\section{Scheduling}\label{sec:vq-scheme}
In this section, we develop scheduling policies that yield activity fractions closely matching the NUM solution. Scheduling is done independently and in parallel for each band. As seen in Fig.~\ref{fig:flow-diagram}, a scheduler at a central controller can collect the needed scheduling information and schedule the users according to the proposed scheduling scheme independently for each band (i.e., for each operation option). Considering a scheduling policy for Band-$A$ and letting $L(t)$ be the cluster size  in RB $t$, we define the feasible scheduling policy as follows.
\begin{defi}\label{feas-sched-one} {\bf Feasible Schedule:}
A scheduling policy 
$\left\{ \Uc_\Cc(t); \  , \forall \Cc\subset\Bc^{(A)}, |\Cc|\leq L_{\max}^{(A)}, \forall t \text{ in Band-}A \right\}$ is feasible with respect to the UCS based on Defn.~\ref{first-arch}, if it satisfies the following: 
\begin{enumerate}[(i)]
\item For each $t$, the policy assigns RB $t$ to clusters with $\Cc\!\subset\!\Bc^{(A)}$ and $|\Cc|\!=\!\Csize(t)$ in Band-$A$; that is, for each cluster $\Cc$ with $\Uc_\Cc(t)$ being non-empty, we have $\Cc\!\subset\!\Bc^{(A)}$ and $|\Cc|\!=\!\Csize(t)$.
\item For each $t$, each user is served by at most one cluster; that is, $|\sum_{\Cc\subset\Bc}\,\mathbbm{1}\{ k\in  \Uc^{(A)}_\Cc(t)\}|\le 1$.
\item For each $t$ in Band-$A$ and for each  BS $j\in\Bc^{(A)}$, BS  $j$ serves at most $S_{j}(\Csize(t))$ users; that is,
$\left| \cup_{\Cc:  j\in \Cc, \Cc\subset\Bc^{(A)}} \Uc_\Cc(t)\right| \le S_{j}(\Csize(t))$.
\end{enumerate}
\end{defi}

\vspace{-0.4cm}
\subsection{The Feasibility of the NUM Solution in Implementation}
It is easy to verify that $\left\{x_{k\Cc}^{(A)}\right\}$  yielded by any feasible schedules defined by Defn. \ref{feas-sched-one} satisfy (\ref{eq:opt-ct-cluster-cvx})-(\ref{eq:opt-ct-positive-cvx}). 
In fact, when $L_{\max}^{(A)}\!=\!1$ (i.e., cellular cases), there exists at least one feasible schedule that can provide long-term activity fractions approaching the solution of (\ref{eq:opt-cvx}) \cite{BetBur14a}.  
However in the general case $L_{\max}>1$ this is not necessarily true. For instance, for networks with cluster combinations $\{j_1,j_2\}$, $\{j_1,j_3\}$ and $\{j_2,j_3\}$, where $j_1,j_2$ and $j_3$ are BS indexes, 
there exist  $\left\{x_{k\Cc}^{(A)}\right\}$ satisfying   (\ref{eq:opt-ct-cluster-cvx})--(\ref{eq:opt-ct-positive-cvx}), for which no feasible schedule of Defn.~\ref{feas-sched-one} exists.

\begin{theo}\label{theo:feasible}
In the UCSs with $L_{\max}^{(A)}>1$ in some Band-$A$ and with the type of cluster combinations  $\{j_1,j_2\}$, $\{j_1,j_3\}$ and $\{j_2,j_3\}$, where $j_1,j_2$ and $j_3$ are BSs in $\Bc^{(A)}$, there exist some activity fractions satisfying (\ref{eq:opt-ct-cluster-cvx})-(\ref{eq:opt-ct-positive-cvx}) that cannot be implemented by any feasible schedule in Defn.~\ref{feas-sched-one}. 
\end{theo}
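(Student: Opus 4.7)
The plan is to establish Theorem \ref{theo:feasible} by exhibiting an explicit counterexample: a three-BS instance with the given ``triangle'' cluster structure $\{j_1,j_2\}, \{j_1,j_3\}, \{j_2,j_3\}$, for which one can write down activity fractions that satisfy all of (\ref{eq:opt-ct-cluster-cvx})--(\ref{eq:opt-ct-positive-cvx}) but that no schedule obeying Defn.~\ref{feas-sched-one} can realize. The construction uses three users $k_1, k_2, k_3$, with $k_i$ associated only to one of the three pairwise clusters, and fixes the parameters so that $S_j(2) = 1$ for each BS and $\mu_A = \lambda_{A2} = 1$ (i.e., the entire band is devoted to size-$2$ clusters).

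First I would define the candidate fractions $x^{(A)}_{k_1,\{j_1,j_2\}} = x^{(A)}_{k_2,\{j_1,j_3\}} = x^{(A)}_{k_3,\{j_2,j_3\}} = \tfrac{1}{2}$ with all other $x$'s zero, and directly verify feasibility with respect to (\ref{eq:opt-ct-cluster-cvx})--(\ref{eq:opt-ct-positive-cvx}): for each $j \in \{j_1,j_2,j_3\}$ the per-BS sum in (\ref{eq:opt-ct-cluster-cvx}) equals $(\tfrac{1}{2}+\tfrac{1}{2})/1 = 1 = \lambda_{A2}$; each user has a single positive activity fraction of value $\tfrac{1}{2} \le \lambda_{A2}$, so (\ref{eq:opt-ct-ue-cvx}) holds; and (\ref{eq:opt-ct-sumArch-cvx})--(\ref{eq:opt-ct-positive-cvx}) are immediate. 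This establishes that the fractions lie in the NUM feasible set.

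Next I would prove infeasibility at the schedule level. The key structural observation is the ``triangle'' property: any two distinct clusters among $\{j_1,j_2\}, \{j_1,j_3\}, \{j_2,j_3\}$ share a common BS. Since Defn.~\ref{feas-sched-one}(iii) bounds the number of users served by that shared BS by $S_j(2) = 1$, at most one of these three clusters can be active (i.e., can serve a user) on any single RB. Therefore, on each RB exactly one cluster is used and it serves at most one user, so summing over all RBs the total activity over all three clusters obeys $\sum_{\Cc}\sum_k x^{(A)}_{k\Cc} \le 1$. However, the candidate assignment forces $\sum_{\Cc}\sum_k x^{(A)}_{k\Cc} = \tfrac{3}{2} > 1$, a contradiction, so no feasible schedule can realize the NUM-feasible fractions.

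I expect the main obstacle to be phrasing the ``at most one cluster per RB'' step cleanly and rigorously: in particular, showing that with $S_j(2)=1$ no overlap of two triangle clusters on the same RB can occur, and that the accounting $\sum_{\Cc}\sum_k x^{(A)}_{k\Cc} \le 1$ really is implied by Defn.~\ref{feas-sched-one}(iii) under the triangle structure. The rest is bookkeeping; once this bound is established the contradiction is immediate. A brief closing remark would note that the construction extends verbatim to any Band-$A$ containing such a triangle of size-$2$ clusters (inside a larger network) by padding the schedule with zero activity elsewhere, thereby covering the full generality of the theorem's hypothesis.
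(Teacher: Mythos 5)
Your core counting argument is sound for the instance you construct, but it follows a genuinely different route from the paper. The paper argues polyhedrally: it defines the set $\mathcal{F}$ of integral feasible scheduling instants, notes that any schedulable activity fractions lie in $\conv(\mathcal{F})$, and then shows that the NUM-feasible set $X$ is strictly larger by observing that the constraint matrix contains the triangle submatrix $\left[\begin{smallmatrix}1 & 1 & 0\\ 1 & 0 & 1\\ 0 & 1 & 1\end{smallmatrix}\right]$ of determinant $-2$, hence is not totally unimodular, and invoking the Hoffman--Kruskal theorem to deduce the existence of non-integral extreme points of $X$, which cannot lie in $\conv(\mathcal{F})$. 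You instead exhibit an explicit point and kill every feasible schedule by a per-RB counting bound: with $S_j(2)=1$ at the three BSs, any two of the triangle clusters share a BS, so by Defn.~\ref{feas-sched-one}(ii)--(iii) at most one of them can be nonempty on any RB and it serves at most one user, forcing $\sum_{\Cc}\sum_k x_{k\Cc}^{(A)}\le 1<\tfrac{3}{2}$. That argument is more elementary, self-contained, and produces an explicit certificate, which the paper's abstract existence argument does not.

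The gap is one of scope. The $S_j(L)$'s are \emph{predefined parameters} of the UCS (footnote to Defn.~\ref{admissible-dmimo}), not quantities the prover may pick, and the theorem is stated for UCSs with the triangle cluster combinations and whatever scheduling-set sizes the network has; the paper's total-unimodularity argument is aimed at that generality. Your construction needs $S_{j_1}(2)=S_{j_2}(2)=S_{j_3}(2)=1$, and the closing claim that it ``extends verbatim'' to any band containing such a triangle is false: if, say, $S_j(2)=2$ at the three BSs, the very same fractions $x=\tfrac12$ \emph{are} implementable --- serve $k_1$ via $\{j_1,j_2\}$, $k_2$ via $\{j_1,j_3\}$, $k_3$ via $\{j_2,j_3\}$ simultaneously on half of the RBs, since each BS then serves exactly two users. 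So as written you have proved the statement only for the subclass of UCSs with unit scheduling-set sizes at the triangle BSs. To approach the stated generality you would need a capacity-dependent construction --- e.g., when $S_{j_1}(2)+S_{j_2}(2)+S_{j_3}(2)$ is odd, tight NUM fractions achieve a per-RB total of $(S_{j_1}(2)+S_{j_2}(2)+S_{j_3}(2))/2$ while any feasible schedule is capped at the integer value $\lfloor (S_{j_1}(2)+S_{j_2}(2)+S_{j_3}(2))/2\rfloor$, giving the same contradiction --- or else the polyhedral non-TU route the paper takes.
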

\begin{proof}
See Appendix \ref{pf:theo-feasible}.
\end{proof}
\noindent Hence, the coarser time-scale NUM problem (\ref{eq:opt-cvx}) does not capture the finer time-scale  constraints associated with feasible schedulers. Although, in general, (\ref{eq:opt-cvx}) provides an upper bound on the network performance, as we show next, using activity fractions that are the solution to (\ref{eq:opt-cvx}), we can design scheduling policies, whose performance is close to the utility provided by the solution to~(\ref{eq:opt-cvx}).

\vspace{-0.4cm}
\subsection{Virtual Queue Based Scheduling Scheme}\label{subsec:vq-scheme}
\vspace{-0.2cm}
We next present scheduling policies for the UCS architecture  comprised of $\sum_{A=1}^3\Csize_{\max}(A)$ parallel schedulers, one per each subband. 
We describe a method for scheduling users over the RBs from the $\lambda_{A\Csize}>0$ fraction of RBs dedicated to clusters of size $\Csize$ in band-$A$.

Given the limited number of fractional users per cluster size $\Csize$, the scheduler approximates the optimal  $\{x_{k\Cc}^{(A)}\}$ by unique association activity fractions,   $\{\tilde{x}_{k\Cc}^{(A)}\}$, given by
\begin{equation}
\label{xtil-kC}
 \tilde{x}_{k\Cc}^{(A)}=
\begin{cases} 
 x_{k\Cc}^{(A)} & \text{if  $\Cc = \Cc^*(k) $}\\
 0 &  \text{otherwise} \end{cases},
\end{equation}
with $
\Cc^*(k) = \argmax_{\Cc: \ |\Cc|=L, \Cc\subset\Bc^{(A)}}  x_{k\Cc}^{(A)}$.
Letting $\mathcal{U}_\Cc^{(A)}$ denote the users for which $\tilde{x}_{k\Cc}^{(A)} > 0$, we  have  $\mathcal{U}_\Cc^{(A)}\cap\mathcal{U}_{\Cc'}^{(A)}\!=\!\emptyset$  for all   $\Cc\!\neq\!\Cc'$ with $|\Cc|\!=\!|\Cc'|$. We also let $\mathcal{U}^{(A\Csize)}\!=\!\cup_{\Cc:\, |\Cc|\!=\!\Csize} \, \mathcal{U}_{\Cc}^{(A)}$ denote the set of users that receive non-zero activity fractions from clusters of size $\Csize$ in Band-$A$. In the rest of this section, we focus on clusters $\Cc$ satisifying $|\Cc|\!=\!\Csize$ and $\Cc\!\in\!\Bc^{(A)}$, unless otherwise specified. 

To assign user $k$ a fraction of RBs close to the desired fraction in the $L^{\rm th}$ subband of Band-$A$, i.e.,  $\alpha_{k}\!=\!\tilde{x}_{k\Cc^*(k)}^{(A)}/\lambda_{A\Csize}$, we consider a max-min scheduling  policy based on virtual queues (VQ), which assumes user $k$ receives rate $\tilde{R}_{k}\!=\!1/\alpha_{k}$ when user $k$ is scheduled for transmission by cluster $\Cc^*(k)$ (i.e., $k\in \Uc^{(A)}_{\Cc^*(k)}(t)$). 
The  cluster-size $\Csize$  scheduler performs at each $t$ a weighted sum rate maximization (WSRM) of the form \cite{ShiCai10}:
\begin{subequations}\label{VQ-C}
\begin{align}
\max_{\tilde{\Uc} \subseteq \Uc^{(A\Csize)}}\ & \ \sum_{ k \in \tilde{\Uc}  } Q_k(t)\tilde{R}_{k},
\label{VQ-C-WSRM}
\\
\text{s.t. } \ \ \  &  \  \   \sum_{k\in \tilde{\Uc}} \mathbbm{1}\{j\in \Cc^*(k)\} \le S_{j}(\Csize), \  \ \forall j\in \Bc,
\label{VQ-C-constraints}
\end{align}
where the weight of user $k$ at time $t$, $Q_k(t)$, is  the VQ length of user $k$ at time $t$. For max-min fairness \cite{ShiCai10},   $Q_k(t)$ is updated by
$
Q_k(t+1) = \max\{0, Q_k(t) - \tilde{R}_k(t)\} + A_k(t), 
$ 
where
\begin{equation}
\tilde{R}_k(t) = \begin{cases} 
 \tilde{R}_k  & \text{if user $k$ is scheduled at time $t$}\\
 0 &  \text{otherwise} \end{cases},
 \end{equation}
\begin{equation}
A_k(t)  = \begin{cases} 
 A_{\rm max}  & \text{if $V > \sum_k Q_k(t)$} \\
 0 &  \text{otherwise} \end{cases},
 \end{equation}
\end{subequations}
with $A_{\max}$ and $V$ chosen sufficiently large \cite{ShiCai10}. 
Note that in the absence of  constraints (\ref{VQ-C-constraints}), the max-min scheduler  (\ref{VQ-C})  schedules user $k$ the desired fraction of RBs, $\alpha_k$.

Scheduling via (\ref{VQ-C}) is impractical, as it amounts to solving for each RB $t$ an integer linear program~(\ref{VQ-C}). A number of heuristic algorithms can be used to provide feasible (though generally suboptimal) solutions to (\ref{VQ-C}).  In this paper, we consider a rudimentary greedy algorithm. Letting $K_{A\Csize}\!=\!|\Uc^{(A\Csize)}|$ be the total number of users to be served by clusters of size $\Csize$, the greedy algorithm for size-$L$ clusters at time $t$ operates as follows:
\begin{enumerate}[1.]
\item Determine a user order $\pi(k)$, where  $Q_{\pi(k)}(t) \tilde{R}_{\pi(k)} \ge Q_{\pi(k+1)}(t) \tilde{R}_{\pi(k+1)}$  for all $k\in\Uc^{(A\Csize)}$.
\item Initialization: $k=1$, and $\tilde{\Uc}=\emptyset$.
\item  If the user set $\tilde{\Uc} \cup \{\pi(k)\}$ satisfies all the constraints in (\ref{VQ-C-constraints}), set $\tilde{\Uc} =\tilde{\Uc} \cup \{\pi(k)\}$.
\item If $k< K_{A\Csize}$,  set $k=k+1$ and go to step 3.
\item Output $\tilde{\Uc}$ as the scheduling user set for size-$L$ clusters in Band-$A$ at time $t$.
\end{enumerate}

\section{Performance Evaluation}\label{sec:simulation}
In this section, we present a simulation-based evaluation based on the ``wrap-around'' layout in Fig.~\ref{fig:network}. We also present the simulation results with the network deployment including more hexagonal modeled macrocells and non-uniformly distributed users (based on 3GPP layout in TR 36.872).\footnote{We assume full-buffer traffic model, while the study of more general traffic models are left for future work.}

The parameters used for the layout in Fig.~\ref{fig:network} are given as follows unless otherwise specified. 
There are 4 macros with $M_j\!=\!100$ and  $S_j(|\Cc|)\!=\!\max\{10\rho|C|, 10\}$,  and 32 small cell BSs  with  $M_j\!=\!40$ and $S_j(|\Cc|)\!=\!\max\{4\rho|C|, 4\}$, where $\rho$ is a tunable parameter in $[0,1]$. There is 1 small cell BS at the center of each white square, while 3 small cell BSs being dropped uniformly within each shaded square (hotspot).  Also, 15 and 90  single-antenna users are dropped uniformly in each white and shaded square, respectively. 
The macro and small cell BS transmit powers are 46dBm and 35dBm, respectively. The path-loss for  macro-user links and small cell BS-user links are $128.1\!+\!37.6\log_{10}d$ and $140.7\!+\!36.7\log_{10}d$,  respectively, with the distance $d$ in km.  The noise power spectral density is $-174$ dBm/Hz. 

We consider three distinct macro-small cell resource sharing scenarios: (i) the shared scenario with macros and small cell BSs transmitting on the same RBs -- operations with $A\!=\!1$; (ii) the orthogonal scenario with macros and small cell BSs transmitting on different bands -- operations with $A\!\in\!\{2,3\}$, where we provide macros (Band-$2$) 20\% RBs as an illustrative example; (iii) RB blanking with macros muted on certain RBs -- operations with $A\!\in\!\{1,3\}$. 
Note that, although we can jointly optimize the resource partition (i.e., $\mu_A$ ) among different bands and user activity fractions (i.e., $x_{k\Cc}^{(A)}$) using (\ref{eq:opt-cvx}) in the orthogonal scenario, we fix $\mu_A$ due to the following reasons. The resource partition among macro and small cells is most likely static (or semi-static) in practice. Moreover, the macro and small cells may operate on different frequency bands (e.g., the macro and small cells may transmit on lower-frequency bands and higher-frequency bands, respectively), where $\mu_A$ then depends on the available resources on each band and thus is not a variable to optimize. As for the selection of the fixed values for $\mu_A$, we set $\mu_2 = 0.2$ and $\mu_3 = 0.8$ as an illustrative example. In our selection, we let $\mu_3$ to be larger than $\mu_2$ since the small cells are deployed more densely than the macro BSs. Our simulation results can be easily updated with other values. For completeness, we have provided the simulation results in the orthogonal scenarios with different $\mu$ at the end of subsection \ref{sec:simu-layout1}.

We make comparisons between the conventional approach (i.e., max-SINR), the approach of \cite{BetBur14a} and the proposed UCS of this work. Both max-SINR and \cite{BetBur14a} are cellular approaches. In fact, the formulation of \cite{BetBur14a} is equivalent to UCS in Scenario (i) with $L_{\max}(1) = 1$. 

$L_{\max}^{(A)}$ depends on the band: For our simulations, we consider $L_{\max}^{(1)}\in\{1,4\}$ and $L_{\max}^{(3)}\in\{1,4\}$. For Band-$2$, only cellular transmission from macro BSs are allowed, and hence $L_{\max}^{(2)}\!=\!1$. The number of all possible  clusters of size greater than $4$ is too large for any practical purpose. Besides, not all subsets of BSs are good candidates for being clusters. We determine the set of potential BSs from the perspective of users: we let each user pick the strongest 8 BSs providing the largest signal strength to that user\footnote{We pick the strongest 8 BSs, since the performance of picking the strongest 9 BSs is almost the same as the 8-BS case, while the utility of picking the strongest 7 BSs is less than the 8-BS case.}, and the potential BS clusters that can serve the user only include BSs among these 8 BSs.

\begin{figure}
\centering
\includegraphics[width=7cm, height=6.5cm]{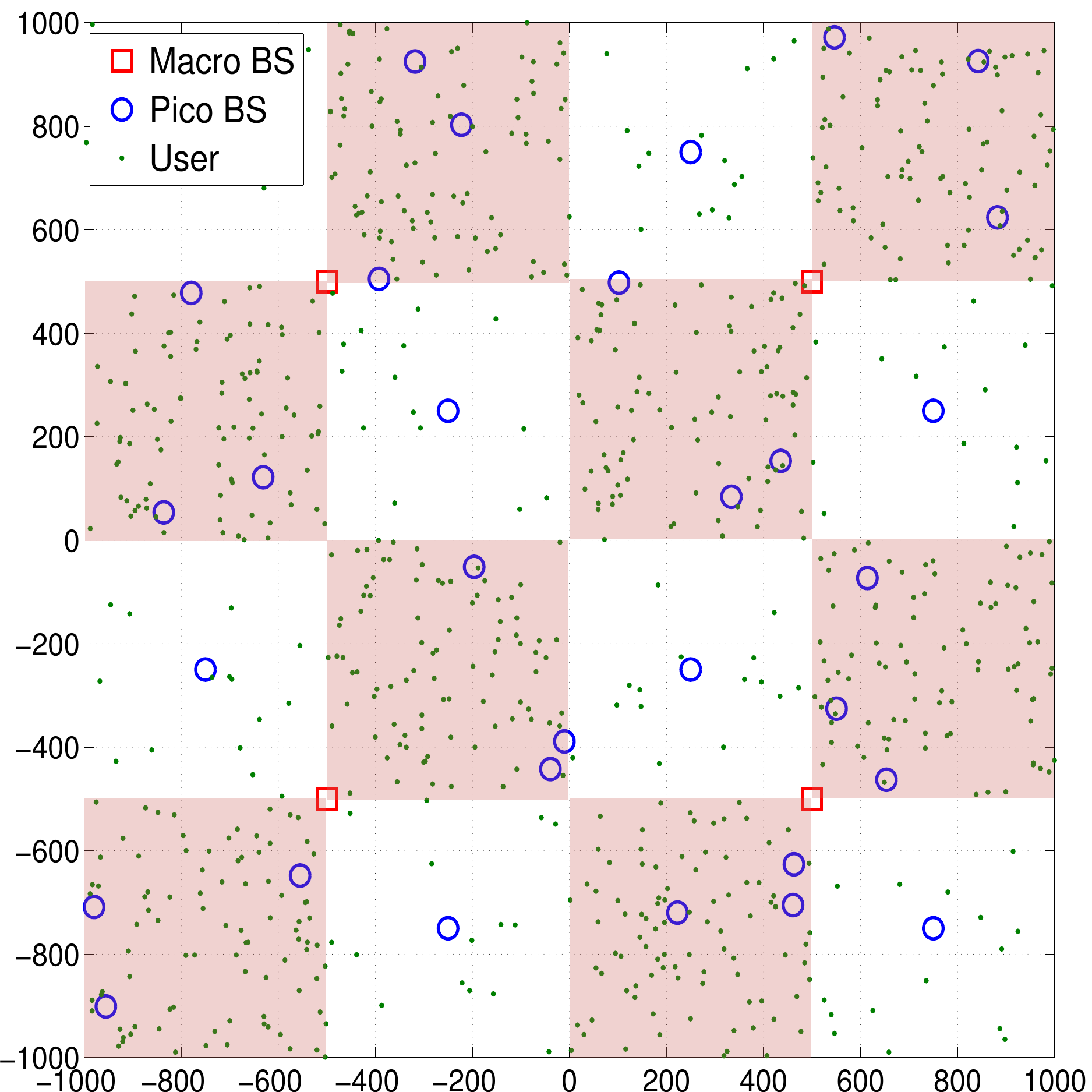}
\caption{The illustration of network deployment. The white grids are the regular areas, while the shadowed grids are hotspots.}
\label{fig:network}
\end{figure}

There is a one-to-one mapping between the log utility and the geometric mean of rates as $\left(\prod_{k=1}^{K} R_{k}\right)^{\frac{1}{K}}$ $=\!\exp\left(\frac{1}{K}\sum_{k=1}^K \log R_{k}\right)$, thus we use geometric mean of rates as the metric for performance evaluation.

\subsection{Simulation of Layout 1 (Figure \ref{fig:network})}\label{sec:simu-layout1}
Fig. \ref{fig:geomean-rate} show the geometric mean of rates in scenarios (i) and (ii). The optimal solution to  (5),  hereby denoted as UCS-NUM, is obtained by CVX. 
We provide performance comparisons between the CVX solution (denoted as the UCS-NUM) of (5) and the solution of the dual subgradient based algorithm. The latter has almost the same performance as the NUM solution, which validates our analysis. We observe this also for our later simulations, hence the results of the dual algorithm are skipped in following figures for the sake of clarity. 
It can be seen that when the solution to (\ref{eq:opt-cvx}) is approximated by (\ref{xtil-kC}) with unique association, the utility loss is insignificant thanks to a very few number of fractional users (as shown in Prop. \ref{prop:uniqueass}). Moreover, the proposed greedy VQ  scheduling scheme provides performance close to the NUM solution, and in particular within 90\% of the utility provided by the NUM solution in both scenarios (i) and (ii). Note that in  cellular transmission, the NUM solution is feasible via some scheduler, and thus VQ based scheduling is unnecessary \cite{BetBur14a}. 
We can observe that the UCS significantly improves the geometric mean of rates versus the optimal cellular performance and the max-SINR  association (about 1.6$\times$  in the shared scenario and 1.35$\times$  in the orthogonal scenario versus the  optimal cellular result). 

In Fig. \ref{fig:geomean-rate-ic}, we compare the performance of scenarios (i) and (iii). We can observe that RB blanking further improves the network utility.

\begin{figure}
\centering
		\subfigure[Scenario(i) and Scenario(ii)]{
			\label{fig:geomean-rate}
			\includegraphics[width=7.3cm, height=6.8cm]{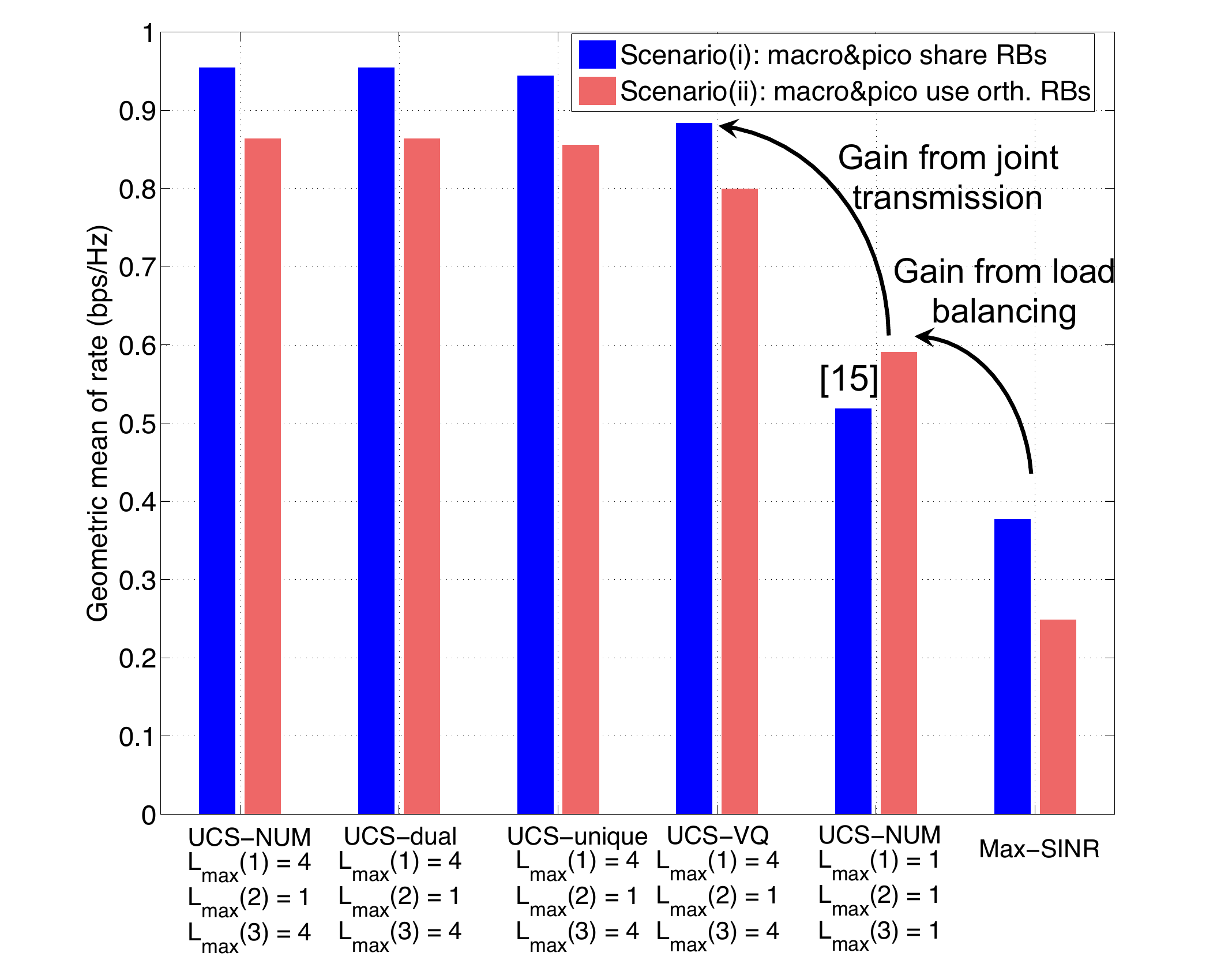}}
		\subfigure[Scenario(i) and Scenario(iii)]{
			\label{fig:geomean-rate-ic}
			\includegraphics[width=7.5cm, height=6.8cm]{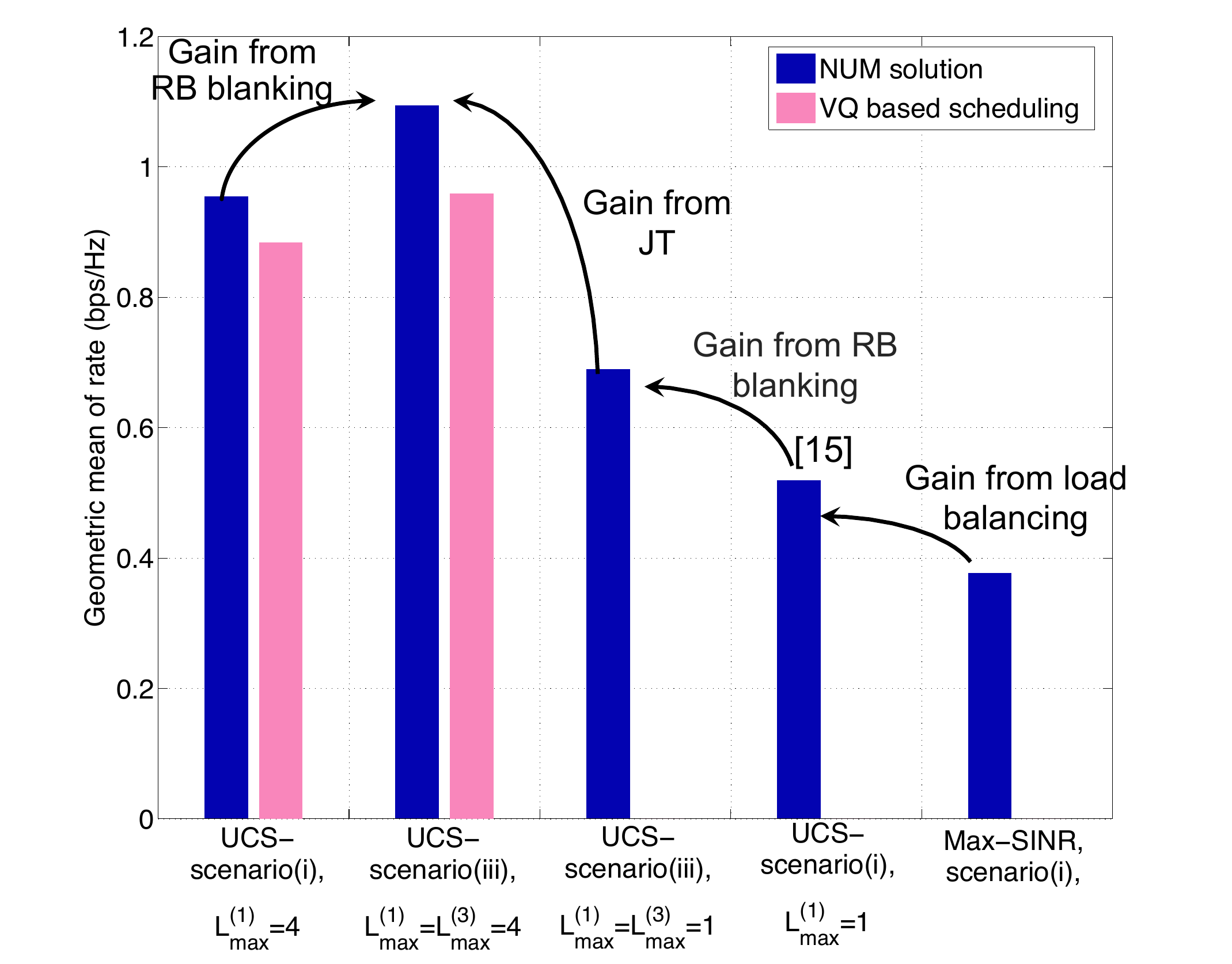}}
\caption{The geometric mean of rates using different approaches ($\rho\!=\!1$): (a) The UCS with VQ based scheduling scheme provides a large performance gain (about 1.6$\times$ and 1.35$\times$ in the shared and  orthogonal scenarios, respectively) versus the optimal cellular result. (b) RB blanking further improves the network performance. }
\label{fig:geomean-rate-all}
\end{figure}

Observation of Fig. \ref{fig:Rcdf} yields similar conclusions.  Indeed Fig. \ref{fig:Rcdf} shows the rate cumulative distribution function (CDF) with different approaches. We illustrate the results of the shared and RB blanking scenarios in the same figure, as RB blanking is essentially motivated from the shared scenario to manage the interference from macros to small cell users. 
The rate of bottom (the 10th percentile)  users using UCS in scenario (i) is about 2.2$\times$ of the optimal cellular solution of \cite{BetBur14a}. The gain is even larger in scenario (ii).

\begin{figure}
\centering
		\subfigure[Scenarios (i) and (iii)]{
			\label{fig:Rcdf-share}
			\includegraphics[width=7.8cm, height=6.5cm]{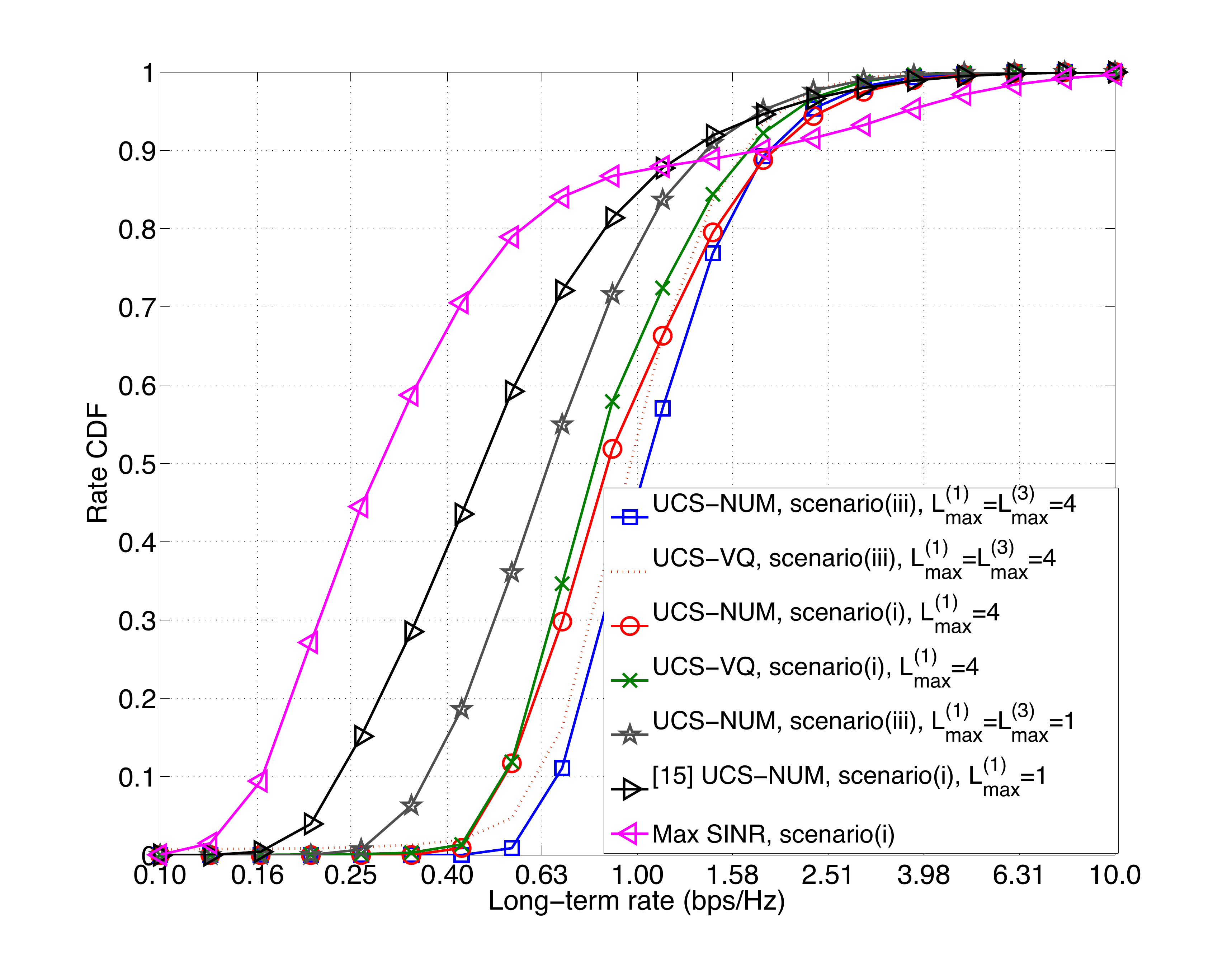}}
		\subfigure[Scenario (ii)]{
			\label{fig:Rcdf-orth}
			\includegraphics[width=7.8cm, height=6.5cm]{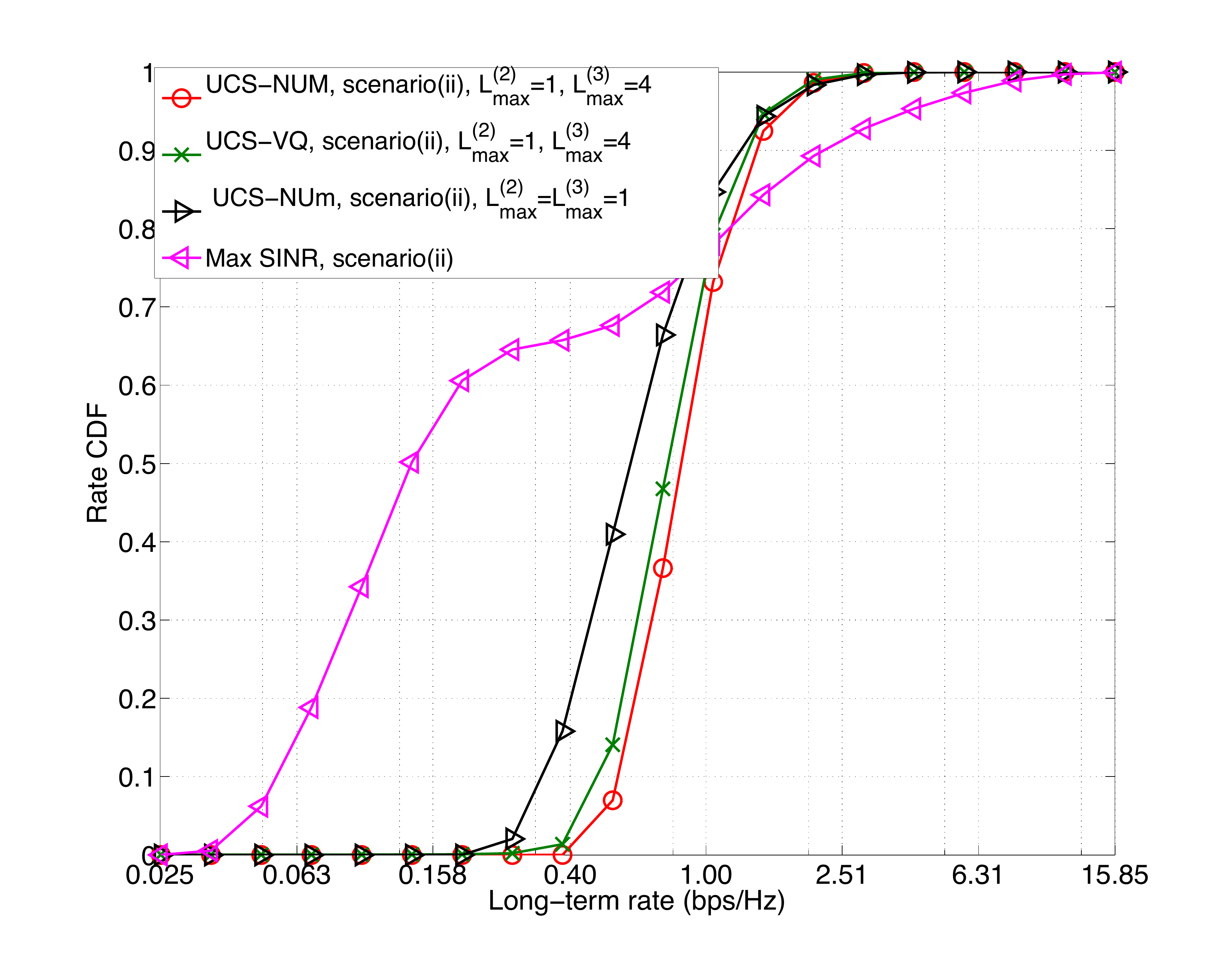}}
\caption{The long-term rate CDF using different approaches ($\rho\!=\!1$). The rate of bottom (10th percentile)  users using UCS is about 2.2$\times$ of the cellular transmission case with optimal user association but without interference management. 
}
\label{fig:Rcdf}
\end{figure}

The number of users served by different clusters with UCS is illustrated in Fig. \ref{fig:UEAss}. In the shared scenario with max-SINR association, most users connect to macro BSs, since macro BSs have much larger transmit power than small cell BSs. By load balancing, many users are offloaded to small BSs in the optimal cellular solution. In our proposed framework, all users are served by BS clusters with multiple BSs, which implies the potential gain using UCS. In the orthogonal scenario, there is no cross-tier interference and more users may get larger SINR from small BSs than macro BSs, hence more users connect to small cell BSs in the max-SINR association compared to the shared scenario.  
Due to the limited resources  (20\% RBs) available in macro BSs, more users are offloaded to small BSs using the load balancing approach in orthogonal cellular transmission.  For scenario (i), the percentage of fractional users  is about 3.3\% using UCS, and 1.2\% in the case with optimal cellular. 
In the RB blanking scenario, the percentage of fractional users in the case using LJT with  RB blanking (scenario (iii)) is about 2.5\%, while the percentage of fractional users adopting cellular transmission with blanking is less than 1\%. Thus, we can conclude that the number of fractional users in all cases is very small, which validates our analysis. 

\begin{figure}
\centering
	\subfigure[Scenario (i)]{
			\label{fig:UEAss-share}
			\includegraphics[height=6cm]{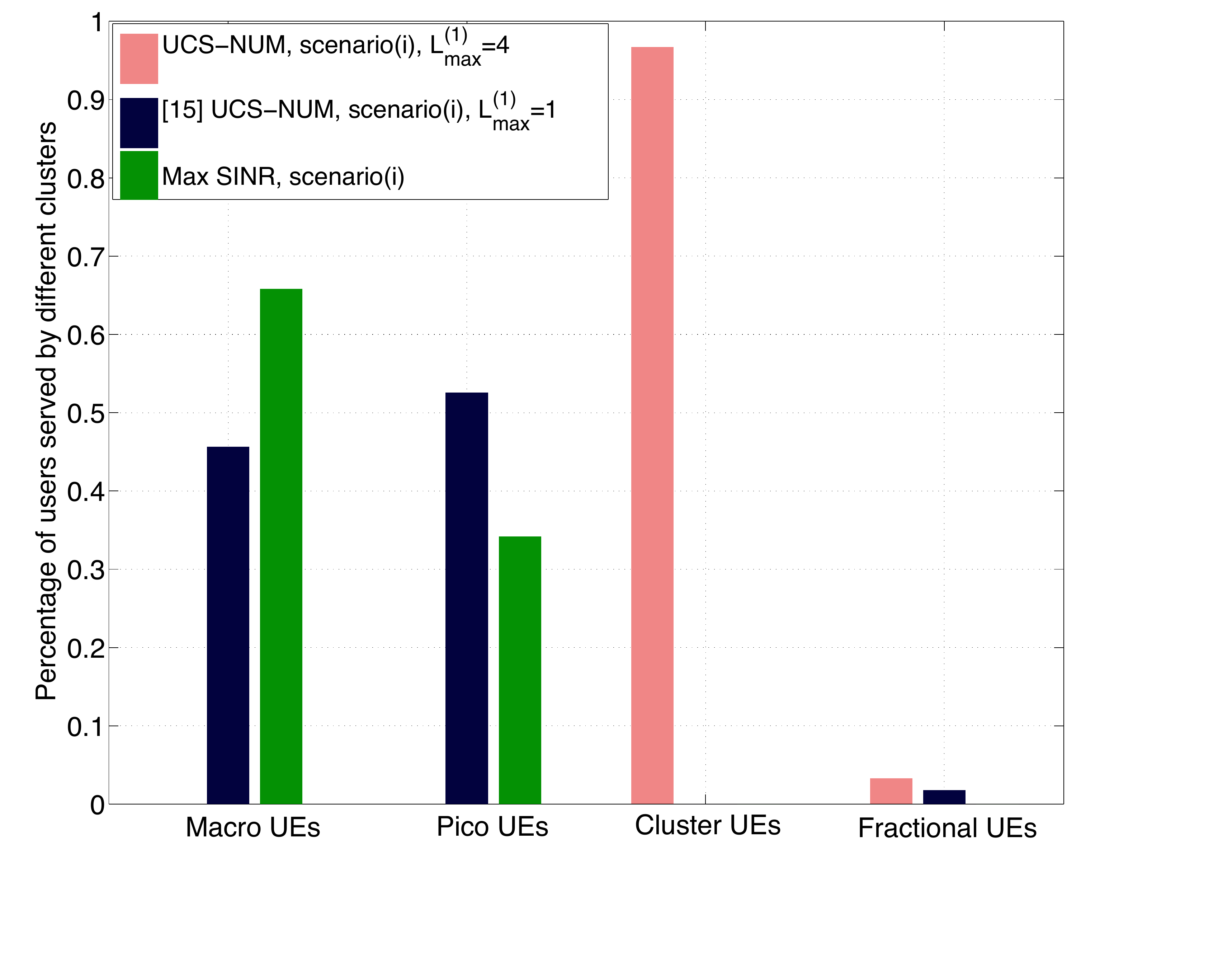}}
		\subfigure[Scenario (ii)]{
			\label{fig:UEAss-orth}
			\includegraphics[height=6cm]{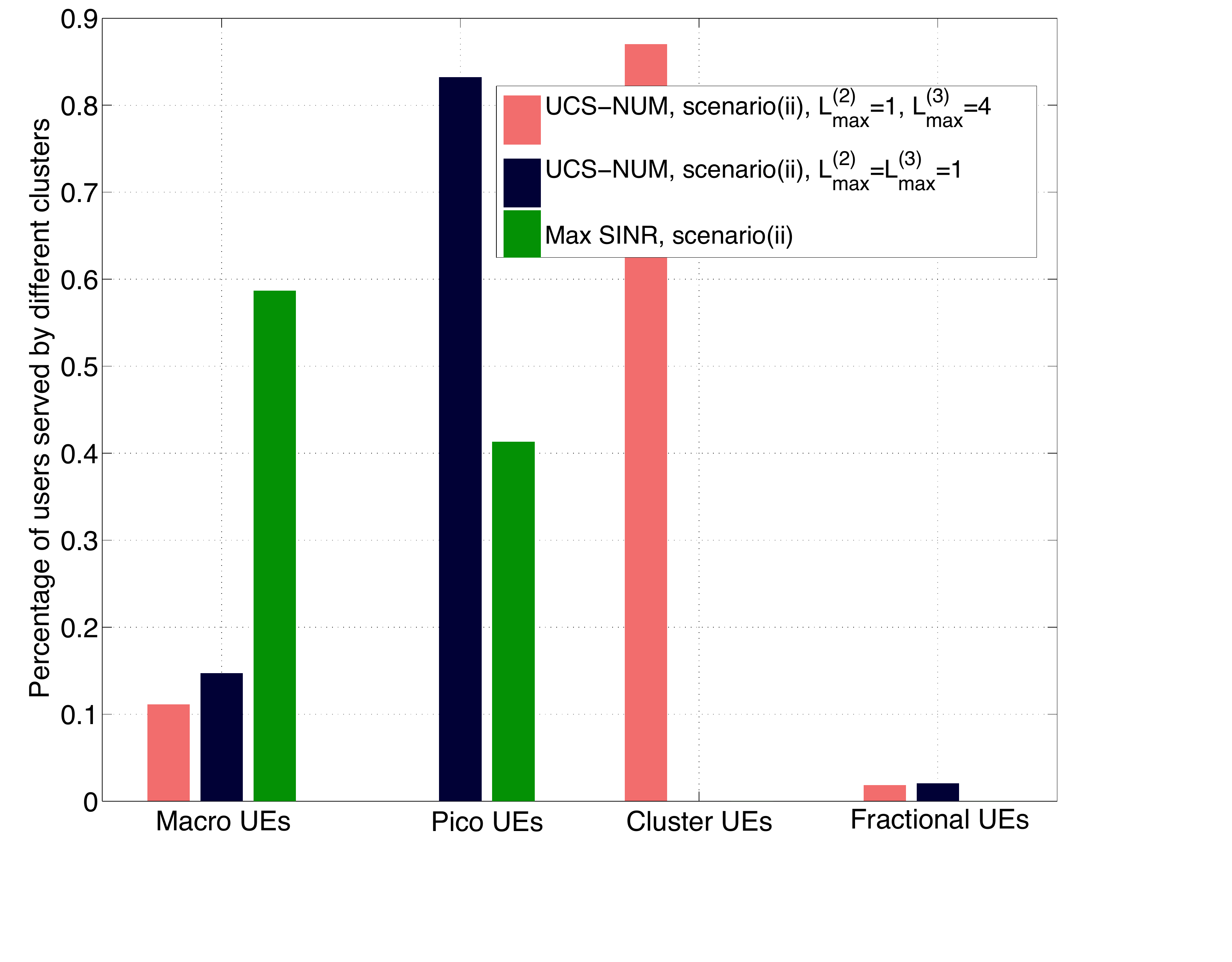}}
\caption{The number of users served by different clusters ($\rho\!=\!1$). Most users have unique association. The ``Cluster UEs'' refer to the users served by  clusters of size larger than 1. 
 }
\label{fig:UEAss}
\end{figure}

\begin{figure}
\centering
\subfigure[]{\label{fig:geomean-rate-rho}
\includegraphics[width=8.4cm, height=7.4cm]{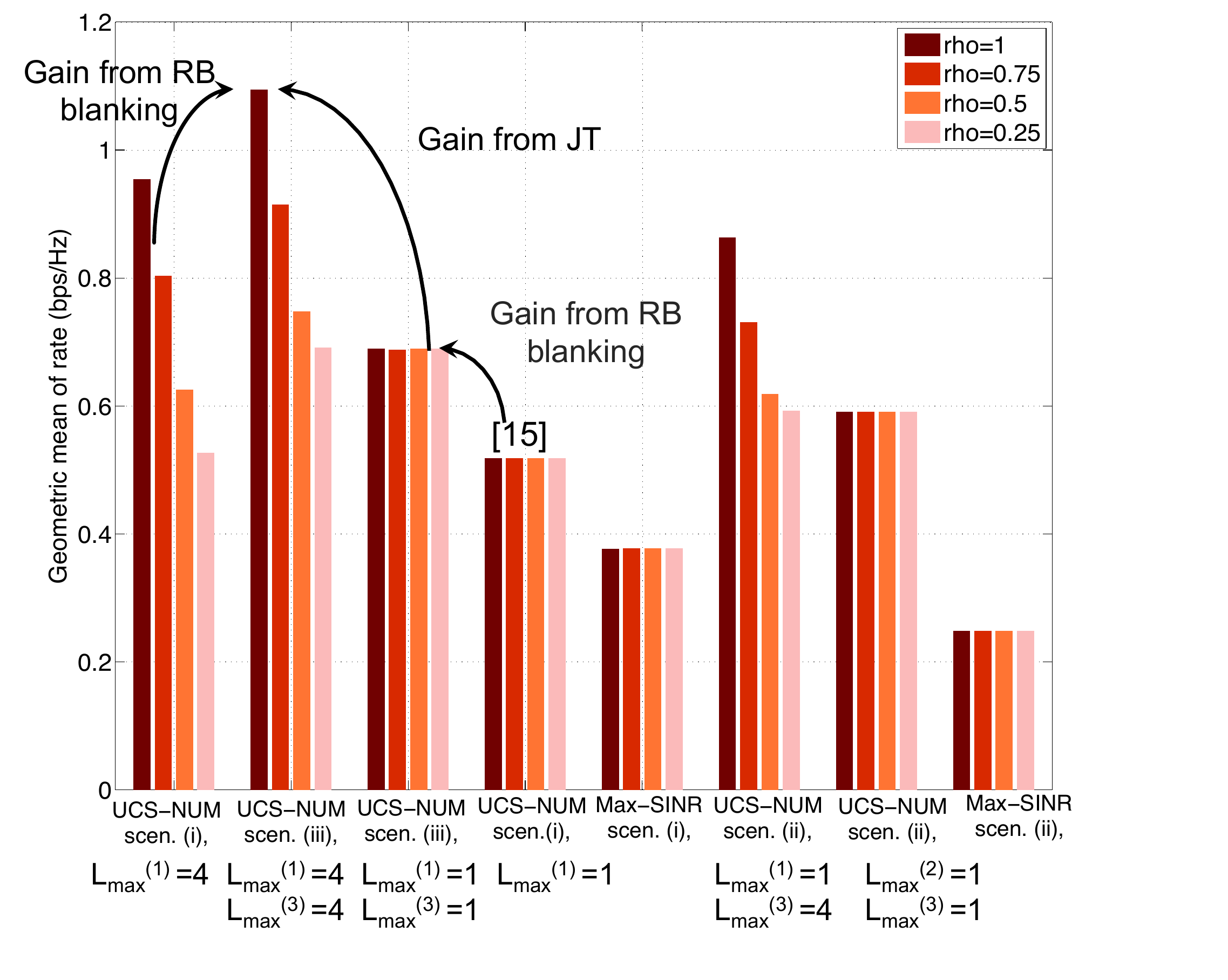}}
\subfigure[]{\label{fig:BStime-rho}
\includegraphics[width=7.8cm, height=6.9cm]{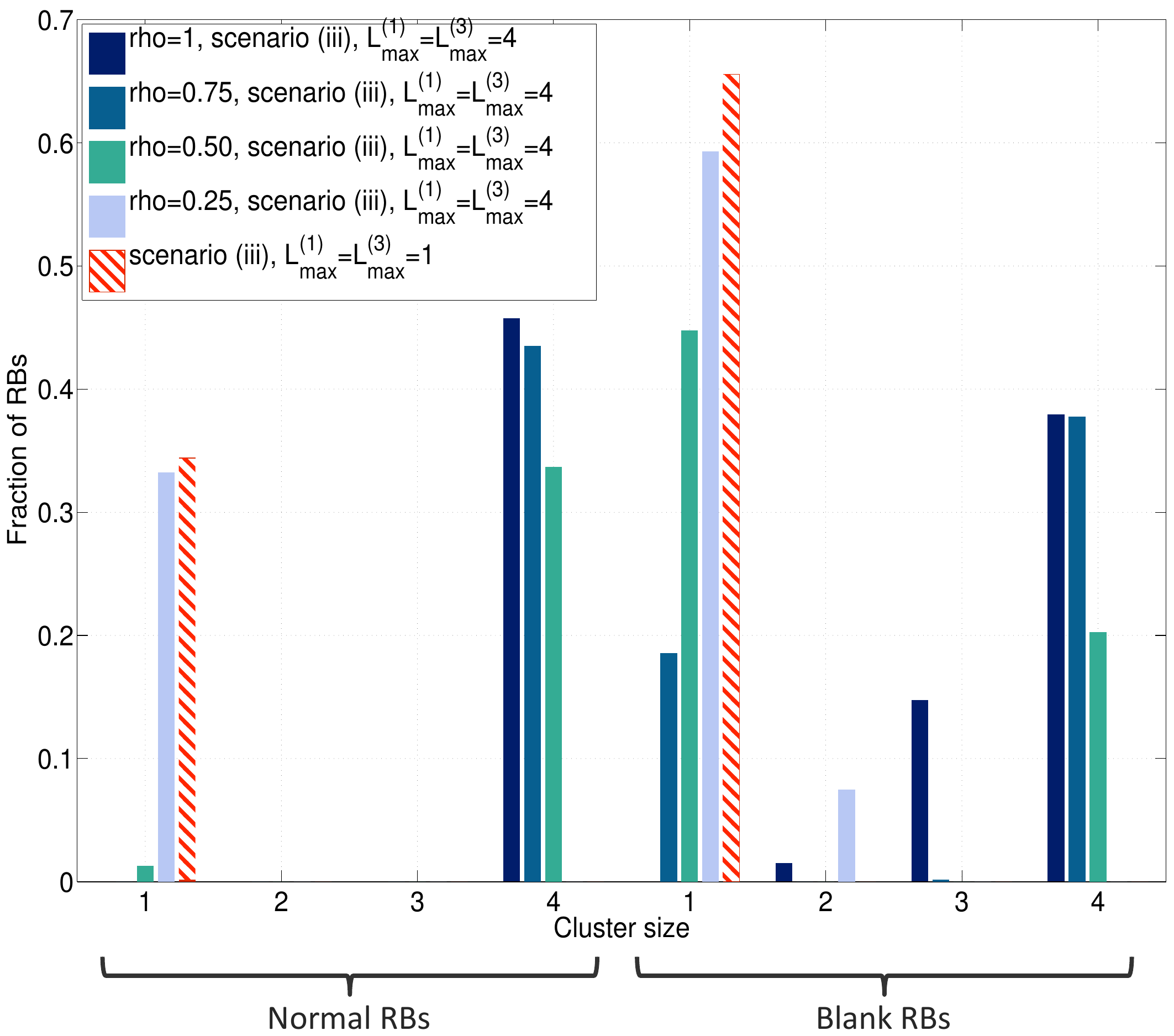}
}
\caption{(a) The geometric mean of rates using different approaches versus $\rho$. As $\rho$ decreases, the gain from JT decreases in both shared and orthogonal scenarios. (b)The fraction of resources allocated to clusters of different sizes in the RB blanking scenario. As $\rho$ decreases, more resources are allocated to clusters with smaller size.
}

\end{figure}

In Fig. \ref{fig:geomean-rate-rho}, we show the geometric mean of rates versus different $\rho$. We observe that the performance gain using UCS decreases as $\rho$ decreases, since the number of users that can be served by clusters decreases.   
This implies that the gain from UCS increases as more UL pilot resources are available in the system. With limited UL pilot resources, the gain from UCS would be quite small.

%

Fig. \ref{fig:BStime-rho} illustrates the resource allocation for clusters of different sizes versus $\rho$ in the RB blanking scenario. The macro BSs are off for about 65\% RBs in cellular transmission. In scenario (iii),  as $\rho$ decreases, the clusters serve less users, and more resources are allocated to the clusters of smaller sizes. When $\rho=0.25$, all resources are allocated to single-BS clusters in normal RBs, and most of the resources are allocated to single-BS clusters in blank RBs. This again suggests that when the available pilot resources are strictly constrained, the gain from  LJT would be limited. 

Fig. \ref{fig:orth-mu} illustrates the simulation results in the orthogonal scenario (Scenario (ii)) with different values of $\mu$. 
As $\mu_2$ increases, the utility  of max-SINR increases due to the fact that most users are associated to macro BSs in max-SINR association. As $\mu_2$ increases, more  resources are available to macro users and thus the utility can be improved. On the other hand, if the available resources are limited for small cells, it would result in less motivation for load balancing (i.e. pushing users off from macro to small cells), since the users, if offloaded to small cell, still suffer limited resources and thus limited rate. Therefore, the gain from load balancing and JT will be limited when small fraction of resources are allocated to small cells (i.e. when$\mu_2$ is large), as can be observed from Figure \ref{fig:orth-mu}. 

\begin{figure}
\centering
\includegraphics[width=9.0cm, height=8cm]{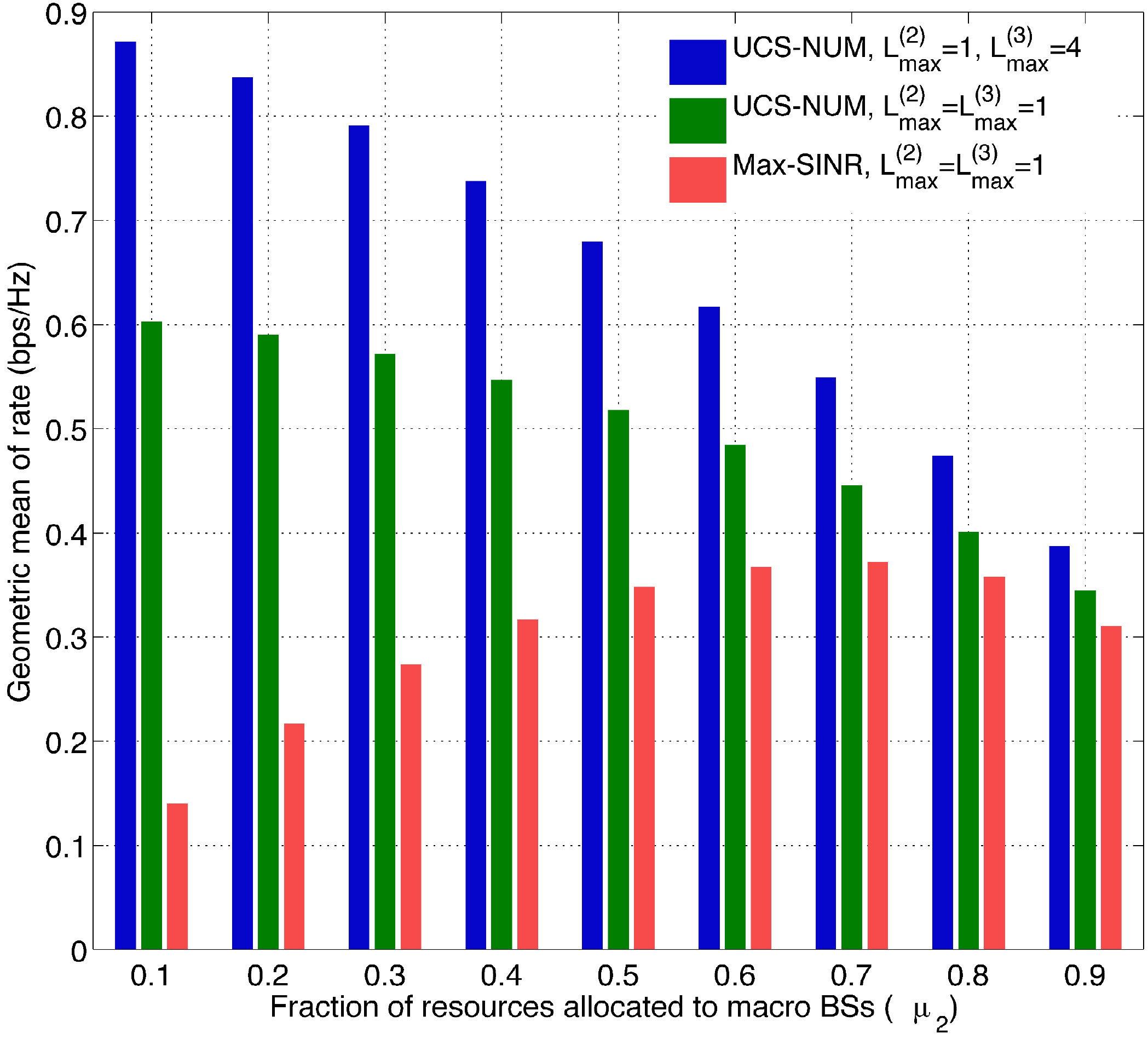}
\caption{The illustration of geometric mean of rates in Scenario (ii) (the orthogonal scenario) with different fractions of resources allocated to macro and small cell layers.}
\label{fig:orth-mu}
\end{figure}

\subsection{Simulation of Layout 2 (Figure \ref{fig:network-3gpp})}
In this subsection, we provide similar simulation results for a network topology complaint with 3GPP HetNet scenario \cite{3GPP872} as shown in Fig. \ref{fig:network-3gpp}. In particular, we have a cellular layout with 7 macro-cell BSs and 3 hotspots per macrocell. Within each hotspot region there are 4 randomly dropped small cell BSs. 120 UEs are uniformly dropped in each hotspot region while 60 more UEs are dropped randomly in the whole coverage area of each macro cell. The macro/small cell powers and the pathloss models used in this experiment are identical to those used in the previous layout. 

Fig. \ref{fig:geomean-rate-ic-hex-w} compares the geometric mean of rate with various methods in scenarios (i) and (ii), and Fig. \ref{fig:geomean-rate-hex-w} presents the geometric mean of rate with various methods in scenarios (i) and (iii).   Similar to the layout illustrated Fig. \ref{fig:network}, we also observe a significant gain in geometric mean of rate by using LJT in all considered scenarios. Specifically, the UCS with VQ based scheduling scheme provides a large performance gain (about 1.35×) versus the optimal cellular result, as illustrated in both Figs. \ref{fig:geomean-rate-ic-hex-w} and \ref{fig:geomean-rate-hex-w}.

Figs. \ref{fig:Rate-cdf-hex} and \ref{fig:Rate-cdf-orth-hex} illustrate the user rate CDFs with different approaches. An increase of 83\% can be observed for the cell-edge users at the 10th percentile compared to the cellular case with optimal load balancing but no interference management. It can also be observed that joint RB blanking and  JT further improve the network performance.

\begin{figure}
\centering
\includegraphics[width=8.6cm, height=7.2cm]{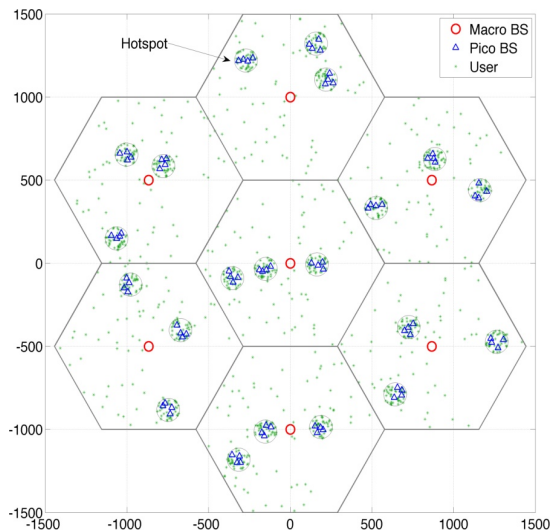}
\caption{The illustration of 3GPP layout.}
\label{fig:network-3gpp}
\end{figure}

\begin{figure}
\centering
\subfigure[Scenarios (i) and (ii)]{\label{fig:geomean-rate-ic-hex-w}
\includegraphics[width=7.8cm, height=6.9cm]{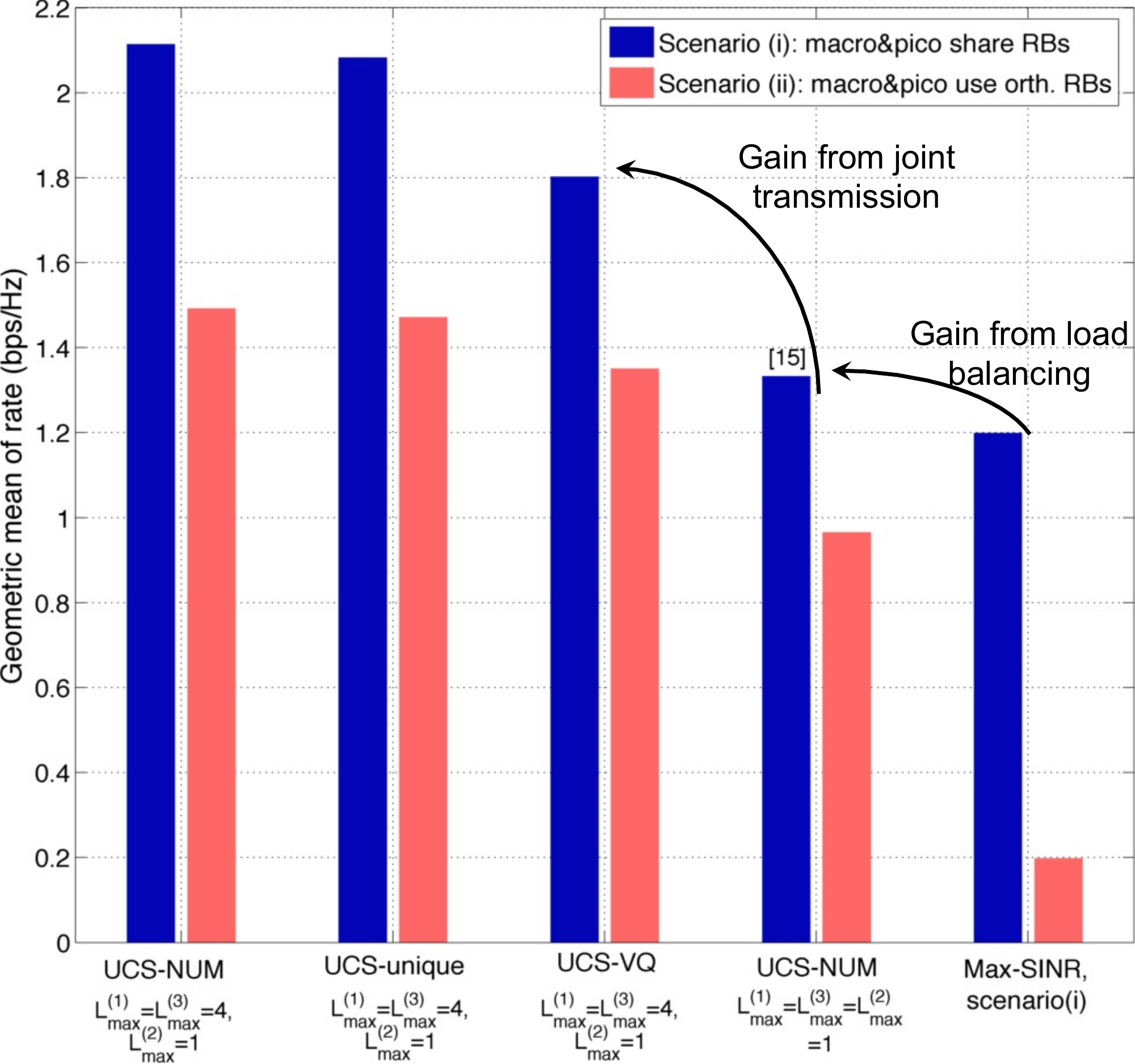}}
\subfigure[Scenarios (i) and (iii)]{\label{fig:geomean-rate-hex-w}
\includegraphics[width=7.8cm, height=6.9cm]{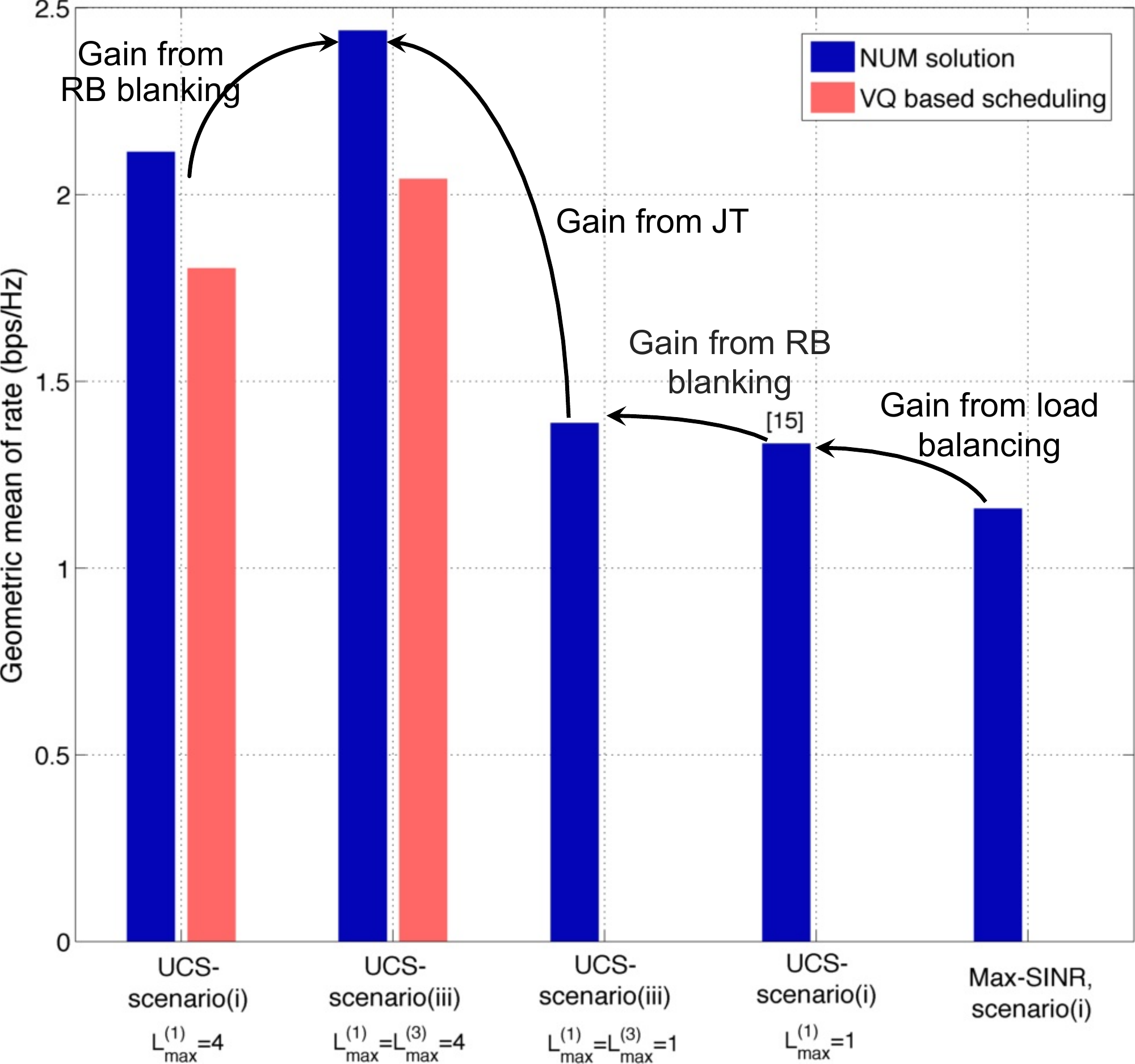}}
\caption{Geometric mean of rate in 3GPP layout with $\rho = 1$.}
\end{figure}

\begin{figure}
\centering
\subfigure[Scenarios (i) and (iii)]{\label{fig:Rate-cdf-hex}
\includegraphics[width=7.8cm, height=6.9cm]{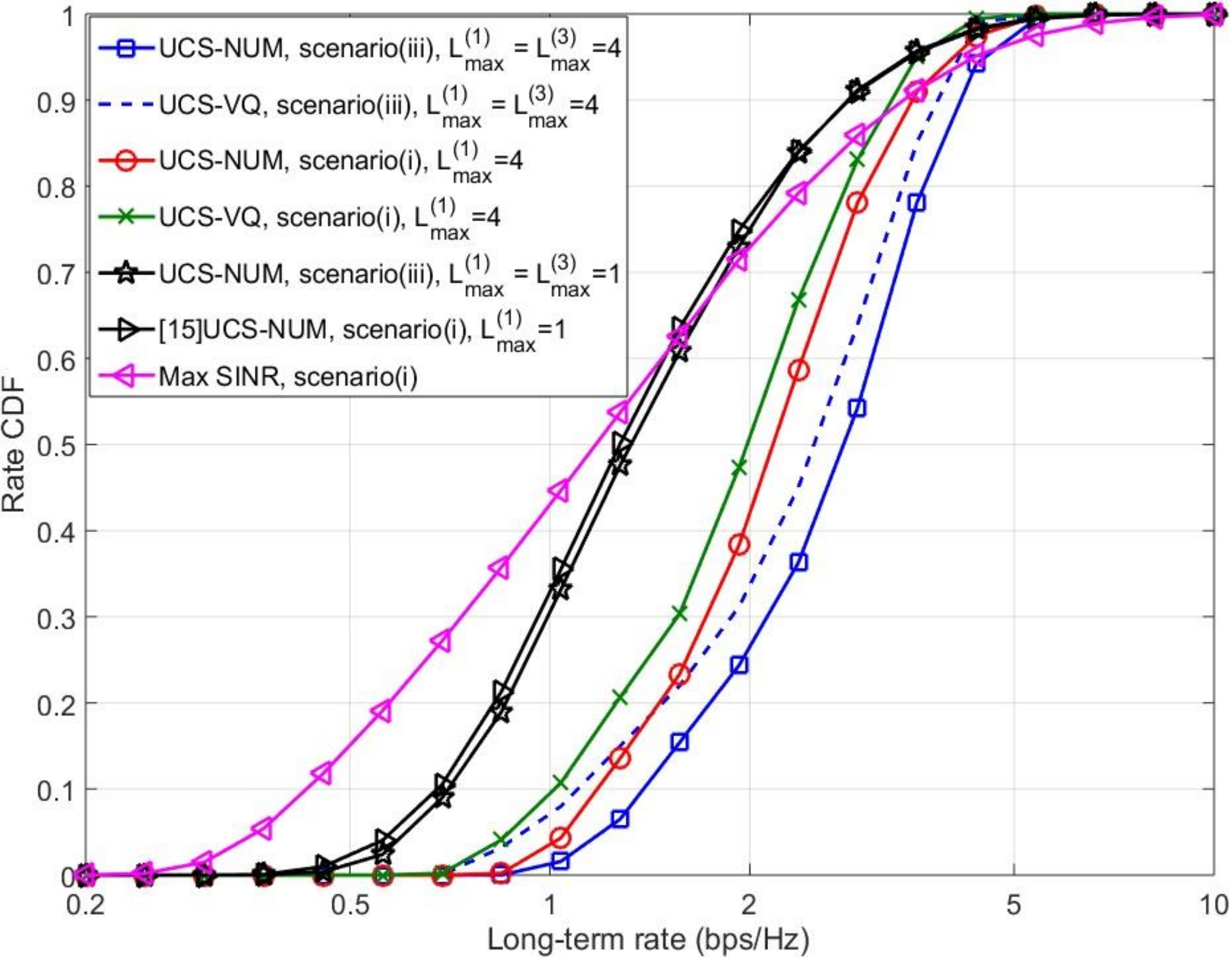}}
\subfigure[Scenario (ii)]{\label{fig:Rate-cdf-orth-hex}
\includegraphics[width=7.8cm, height=6.9cm]{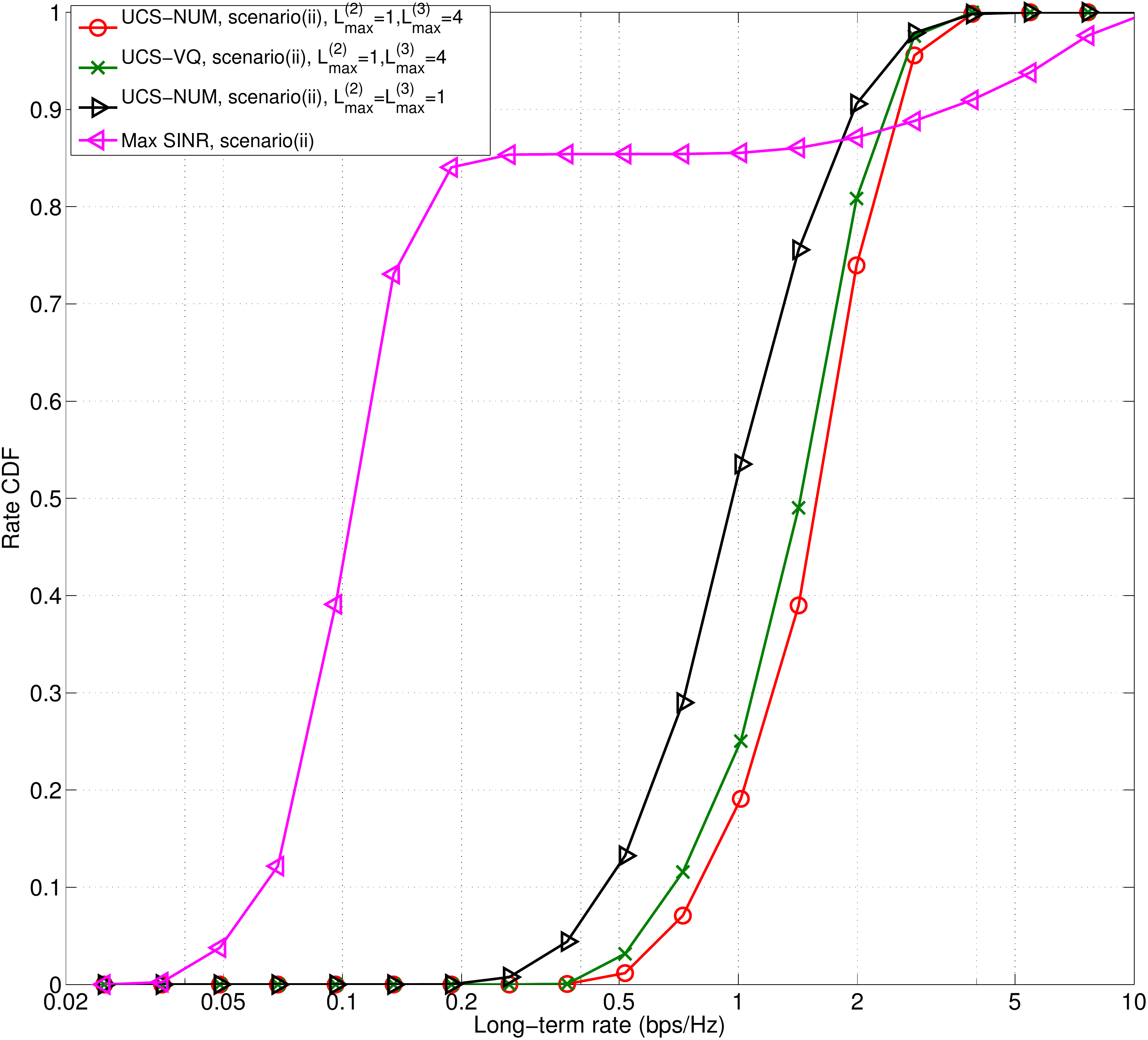}}
\caption{The long-term rate CDF in 3GPP layout with $\rho = 1$.}
\end{figure}

%
%


\section{Conclusion}\label{sec:conclusion}
In this paper, we investigate the joint optimization problem of user association and interference management in the massive MIMO HetNets. We  consider both LJT and RB blanking approaches for interference management. 
We first provide the instantaneous rate from BS clusters by exploiting  massive MIMO properties, namely the rate hardening and the independence of peak rate from the user scheduling. 
We then formulate a convex NUM problem to obtain the optimal user-specific BS clusters and the corresponding resource allocation. The unified formulation can be applied to both LJT and blanking approaches, as well as the case where macro and small BSs use orthogonal resources. We further propose an efficient dual subgradient based algorithm, which is shown to converge towards the NUM solution. 
We show that the NUM solution with LJT may not be implementable by a feasible scheduler, and thus it provides an upper bound on the performance. Showing that most users connect to at most one cluster per RB in heavily loaded networks, we propose to approximate the NUM solution to a unique association, given which we propose a VQ based scheduling scheme to provide approximate but implementable results. Simulations show that the proposed scheduling scheme yields results that closely match the NUM solution. Investigations involving more dynamic settings (e.g., users with high mobility) and the impact of different factors (such as imperfect CSI acquisition,  number of users simultaneously served by BSs in different cluster sizes) on the overall system performance are all subjects of future work. It is also of interest to theoretically bound the gap between the NUM solution and the results of proposed  VQ based scheduling scheme.

\appendices
\section{Proof of Theorem \ref{theo:feasible}}\label{pf:theo-feasible}
We adopt similar techniques in the proof of Theorem 1 in \cite{BetBur14a}. 
Once the statement for one cluster size in one band is proven, the conclusion can be easily extended to general ATSs with various cluster sizes and multiple bands. Thus, we focus on clusters of size $L$ in Band-$A$ (i.e., subband $L$ in Band-$A$). We ignore the index $A$ for simplicity. All the clusters considered below satisfy $\Cc\!\subset\!\Bc^{(A)}$ and $|\Cc|\!=\!L$.

The set of feasible scheduling instants is denoted by $\mathcal{F}$, which includes vectors $\mathbf{e}$ with element  $e_{k\Cc}\!\in\!\{0,1\}$, where $e_{k\Cc}\!=\!1$ if user $k$ connects to $\Cc$ and $e_{k\Cc}\!=\!0$ otherwise. According to Defn.~\ref{feas-sched-one}, $\mathbf{e}$ is consisted of $\{e_{k\Cc}\}$ satisfying that user $k$ connects to at most one cluster and BS $j$ serves at most $S_j(\Csize)$ distinct users.
By time sharing among the feasible scheduling instants in $\mathcal{F}$, any fractional association in the convex hull of $\mathcal{F}$ can be achieved in the long term. We denote the convex hull of $\mathcal{F}$ by $X'=\conv(\mathcal{F})$ and the set of activity fractions associated to clusters in $A$ satisfying constraints in (\ref{eq:opt-cvx}) by $X$, i.e.,
\begin{equation*}
\begin{aligned}
X=&\left\{x_{k\Cc}: \sum_{\Cc: j\in \Cc}\sum_{k\in\mathcal{U}} \frac{x_{k\Cc}}{S_{j}(\Csize)} \leq 1, \sum_{ \Cc } x_{k\Cc} \leq 1,  x_{k\Cc} \geq 0,\forall k\in\mathcal{U},\forall j\in\mathcal{B}^{(A)} \text{ and }  \forall \Cc\subset\Bc^{(A)} \right\}.
\end{aligned}
\end{equation*}
It is easy to show that any feasible scheduling instants in $\mathcal{F}$ satisfies the constraints (\ref{eq:opt-ct-cluster-cvx})-(\ref{eq:opt-ct-positive-cvx}), and thus $\mathcal{F}\subseteq X$. Note that $X$ is convex. Thus, we have $X'=\conv(\mathcal{F})\subseteq X$.

As for the opposite direction (i.e., $X \not\subseteq X'$),  we first define the \emph{totally unimodular} (TU) matrix: every  square submatrix of a TU matrix has determinant $+1,-1$ or $0$. The Hoffman \& Kruskal's (1956) Theorem claims that a matrix $\mathbf{B}$ is TU if and only if for each integral vector $\mathbf{b}$, the \emph{extreme points}  of the polyhedron $\{\mathbf{z}: \mathbf{B}\mathbf{z}\!\leq\!\mathbf{b}, \mathbf{z}\!\geq\!0 \}$ are integral \cite{Sch98}. 
Denoting by $N_{\Cc L}^{(A)}$ the number of size-$L$ clusters in Band-$A$, we let $\mathbf{x}\!=\![\mathbf{x}_{1}^T, \mathbf{x}_{2}^T,\cdots, \mathbf{x}_{K}^T]^T$ with 
$\mathbf{x}_k\!=\![x_{k\Cc_1}, x_{k\Cc_1}, \cdots, x_{k\Cc_{N_{\Cc L}^{(A)}}}]^T$, and 
$\mathbf{b}\!=\!\left[S_1(L), \cdots, S_{J}(L), 1, \cdots, 1 \right]^T$ with size $(J\!+\! K)\!\times\! 1$.  We let $\mathbf{B}\!=\!\left[\begin{smallmatrix} \mathbf{C}\\ \mathbf{D}\end{smallmatrix}\right]$, where the size of matrices $\mathbf{C}$ and $\mathbf{D}$ are  $J\!\times\!(K N_{\Cc L}^{(A)})$ and $K\!\times\!(KN_{\Cc L}^{(A)})$, respectively. The element in $j$th row and $\left((k\!-\!1)N_{\Cc L}^{(A)}\!+\!i\right)$th column of matrix $\mathbf{B}$ is 1 $\forall k\!\in\!\Uc$ if $j\!\in\!\Cc_i$, and 0 otherwise. 
The matrix $\mathbf{C}$ has all elements being 1.
Recall that we consider large networks including the following type of cluster combination:  $\{j_1,j_2\}$, $\{j_1,j_3\}$ and $\{j_2,j_3\}$ if $L_{\max}^{(A)}\!>\!1$, where $j_1,j_2$ and $j_3$ are BS indexes. Then, $\mathbf{B}$ with $L_{\max}^{(A)}\!>\!1$ always includes the submatrix $\left[\begin{smallmatrix}1 & 1 & 0 \\ 1 & 0 & 1 \\ 0 & 1 & 1 \end{smallmatrix}\right]$ whose determinant is -2, and thus $\mathbf{B}$ is not TU. According to the Hoffman \& Kruskal's (1956) Theorem, there are some non-integer extreme points $\mathbf{v}\!\in\! X$ that cannot be characterized by a convex combination of any elements in $\mathcal{F}$. Thus, we have $\mathbf{v}\!\not\in\! \conv(\mathcal{F})\!=\!X'$ and~$X\!\not\subseteq\! X'$.

\section{Proof of Proposition \ref{prop:uniqueass}}\label{pf:prop-uniqueass}
We use the techniques similar to the proof of Prop.~3 in \cite{YeAnd13ABS}, where a graph is used to represent the association and KKT conditions (\ref{eq:kkt-Rk}) are used to restrict the graph structure. For a given cluster size $L$ in Band-$A$, we denote the graph by $G_1$, where nodes represent users, and  edge represents the BS cluster that serves the two nodes (users). Each node has an ID indicating the user index, while each edge has a color that identifies the BS cluster.  

Recalling that constraints (\ref{eq:opt-ct-ue-cvx}) are inactive, we have $\theta_{kL}^{(A)}\!=\!0, \forall k\!\in\!\Uc$. 
If there are two users $k$ and $m$ being served by size-$L$ clusters $\Cc_1$ and $\Cc_2$ in Band-$A$ (i.e., $x_{k\Cc_1}^{(A)}\!>\! 0$, $x_{k\Cc_2}^{(A)}\!>\! 0$, $x_{m\Cc_1}^{(A)}\!>\! 0$, $x_{m\Cc_2}^{(A)}\!>\! 0$),  we have  $R_k\!=\!\frac{r_{k\Cc_1}^{(A)}}{\sum_{j:j\in\Cc_1}\nu_{jL}^{(A)}/S_j(L)}\!=\!\frac{r_{k\Cc_2}^{(A)}}{ \sum_{j: j\in \Cc_2}\nu_{jL}^{(A)}/S_{j}(\Csize)}$ and $R_m\!=\!\frac{r_{m\Cc_1}^{(A)}}{\sum_{j:j\in\Cc_1}\nu_{jL}^{(A)}/S_j(L)}\!=\!\frac{r_{m\Cc_2}^{(A)}}{\sum_{j:j\in\Cc_2}\nu_{jL}^{(A)}/S_j(L)}$ from KKT condition (\ref{eq:kkt-Rk}), where $R_k\!=\!\sum_{A'=1}^3  \sum_{\mathcal{C'}\subset\Bc(A')} x_{k\mathcal{C'}}^{(A')} r_{k\mathcal{C'}}^{(A')}$. Thus,  we have
\begin{equation}\label{eq:kkt-2user}
\frac{r_{k\Cc_1}^{(A)}}{r_{k\Cc_2}^{(A)}}=\frac{r_{m\Cc_1}^{(A)}}{r_{m\Cc_2}^{(A)}},
\end{equation}
which is true with probability  $0$.  Therefore, it is almost sure that any two users can share at most one same cluster of size $L$ in Band-$A$. Similarly, we consider an example of three users $k, m, i$ and clusters $\Cc_1, \Cc_2, \Cc_3$. User $k$ is associated to $\Cc_1$ and $\Cc_2$, user $m$ is associated to $\Cc_1$ and $\Cc_3$, and user $i$ is associated to $\Cc_2$ and $\Cc_3$. 
We consider the following three cases:

\noindent 1) Clusters $\Cc_1$, $\Cc_2$ and $\Cc_3$ are different: we have
$
\frac{r_{k\Cc_1}^{(A)}}{r_{k\Cc_2}^{(A)}}=\frac{\sum_{j: j\in \Cc_1}\nu_{jL}^{(A)}/S_{j}(\Csize)}{\sum_{j: j\in \Cc_3}\nu_{jL}^{(A)}/S_{j}(\Csize)} \frac{\sum_{j\in \Cc_3}\nu_{jL}^{(A)}/S_{j}(\Csize)}{\sum_{j\in \Cc_2}\nu_{jL}^{(A)}/S_{j}(\Csize)} =\frac{r_{m\Cc_1}^{(A)}}{r_{m\Cc_3}^{(A)}}\frac{r_{i\Cc_3}^{(A)}}{r_{i\Cc_2}^{(A)}},
$
which is true with probability $0$.

\noindent 2) $\Cc_1=\Cc_2\neq \Cc_3$:  users $m$ and $i$ are served  by both $\Cc_1$ and $\Cc_3$, which is true with probability $0$ from~(\ref{eq:kkt-2user}).

\noindent 3) $\Cc_1=\Cc_2=\Cc_3$: users $k$, $m$ and  $i$ are served by the same cluster, which is possible. In this case, the graph  becomes a \emph{complete graph}.

Therefore, the graph $G_1$ with three users either contains a loop with the same color edges or no loop. We can get a similar result for graph $G_1$ with more than three users, where any subgraph formed by users served by the same BS cluster is a complete graph.
Thus, we generate a new graph, $G_2$, where node represents a cluster. Hence, $G_2$ has $N_{\Cc L}^{(A)}$ nodes. There is an edge between two nodes in $G_2$, if they have a common vertex in $G_1$ (i.e., there is at least one user served  by both these two clusters). Thus, the number of users who are served by more than one cluster is limited by the edge of $G_2$. Any loop in $G_2$ corresponds to a loop with more than one edge color in $G_1$. Since there are no such colorful loops in $G_1$, there is no loop in $G_2$. In other words, $G_2$ is a tree. Thus, the  maximal number of edges in $G_2$ (i.e., the maximal number of fractional users) is one less than the number of nodes (i.e., $N_{\Cc L}^{(A)}\!-\!1$). 

\section {Implementation issues of the dual-subgradient algorithm}
\label{app:complexity}
We let $L_{\max}\!=\!\max_A L_{\max}^{(A)}$, $N_{\Cc m}\!=\!\max_{L,A} N_{\Cc L}^{(A)}$ and $N_A$ be the number of operations. To solve (\ref{eq:opt-cvx}) directly by CVX \cite{cvx}, we have the problem of size $O( N_{\Cc m}N_AL_{\max} K)$, which is dominated by the size of variables  $x_{k\Cc}^{(A)}$. On the other hand, as discussed below, the proposed algorithm has lower complexity. 
Let  $L_{a}$   be the maximal number of active cluster sizes over all bands (i.e., $\max_A |\{L: \lambda_{AL}\!>\!0\}|$). The size of the LP (\ref{eq:opt-LP})  is $O(N_{\Cc m} N_A L_a\min\{N_{\Cc m}\!-\!1, K\} \!+\! N_A L_a\max\{0, K\!-\!J\!+\!1\})$, where the first term signifies the size of positive  $x_{k\Cc}^{(A)}$ for fractional users and the second term signifies the size of positive $x_{k\Cc}^{(A)}$ for users with unique association. It is easy to check that the size of (\ref{eq:opt-LP}) is smaller than the size of  (\ref{eq:opt-cvx}). 
As shown in Sec. \ref{sec:simulation},  the number of fractional users is very small (less than 3.5\%$K$), and thus the size of (\ref{eq:opt-LP})  is much smaller (less than 3.5\%) than (\ref{eq:opt-cvx}). Moreover, the size of (\ref{eq:opt-LP}) can be further reduced when $L_a/L_{\max}$ is small (e.g., only 2 active cluster sizes among 4 possible sizes in Sec. \ref{sec:simulation} ). The fast convergence in the first part (i.e., steps (\ref{eq:dual-x})-(\ref{eq:dual-nu})) of the algorithm (less than 60 iterations in our simulation) along with the low complexity per iteration, and the reduced size of  (\ref{eq:opt-LP}) makes that the proposed~algorithm can be more efficient than CVX for larger networks.

\section{Detailed Dual Subgradient Algorithm with Redundanct Contraints}\label{pf:dual-algo}
In the formulated problem (\ref{eq:opt-cvx}) in our paper, constraints (\ref{eq:opt-ct-ue-cvx})-(\ref{eq:opt-ct-mu-cvx}) imply $x_{KC}^{(A)}\leq 1$. Thus, adding the additional constraint $x_{KC}^{(A)}\leq 1$ to (\ref{eq:opt-cvx}) will not change the problem. In other words, the following problem with  constraint $x_{KC}^{(A)}\leq 1$ is equivalent to our original optimization problem (\ref{eq:opt-cvx}).

\begin{subequations}\label{eq:opt-cvx-rd}
\begin{align}
\max\limits_{\lambda_{A\Csize}, x_{k\Cc}^{(A)},\mu_A} \ & \sum_{k\in\mathcal{U}} U\left( \sum_{A=1}^{3} \sum_{\substack{
\Cc: \Cc\subset \Bc^{(A)},  \\ |\Cc|\leq L_{\max}^{(A)}}} x_{k\Cc}^{(A)}r_{k\Cc}^{(A)}\right)\\
\text{s.t. } &\sum_{\substack{\Cc: \Cc\subset \Bc^{(A)}, \\ j\in \Cc, |\Cc|=\Csize}}\frac{\sum_{k\in\mathcal{U}} x_{k\Cc}^{(A)}}{S_j(\Csize)} \leq \lambda_{AL}, \ \forall j\in \Bc^{(A)}, \forall L \leq L_{\max}^{(A)}, \forall A, \label{eq:opt-ct-cluster-cvx-rd}\\
& \sum_{\Cc: |\Cc| = L, \Cc\subset \Bc^{(A)}} x_{k\Cc}^{(A)} \leq \lambda_{AL}, \ \forall k \in \Uc, \forall L \leq L_{\max}^{(A)}, \forall A, \label{eq:opt-ct-ue-cvx-rd}\\
& x_{k\Cc}^{(A)}\in[0, 1], \ \forall k \in \Uc, \forall \Cc: |\Cc|\leq L_{\max}^{(A)}, \forall A, \label{eq:opt-ct-x-cvx2}\\
&\sum_{L=1}^{L_{\max}^{(A)}} \lambda_{AL} \leq \mu_A, \forall A,\label{eq:opt-ct-sumArch-cvx-rd}\\
&\sum_{A=1}^3 \mu_A \leq 1,\label{eq:opt-ct-mu-cvx-rd}\\
& \lambda_{AL}, \mu_A \geq 0, \ \forall k \in \Uc, \forall \Cc, \forall L\leq L_{\max}^{(A)}, \forall A,\label{eq:opt-ct-positive-cvx-rd}
\end{align}
\end{subequations}
Note that the above problem formulation is the same as (\ref{eq:opt-cvx}) in the paper, except the redundant constraint (\ref{eq:opt-ct-x-cvx2}). 
The dual subgradient algorithm in the paper is proposed based on the above equivalent optimization problem (\ref{eq:opt-cvx-rd}). \\
Specifically,  we let $\nu_{jL}^{(A)}$ and $\theta_{kL}^{(A)}$ be the Lagrange multipliers corresponding  to (\ref{eq:opt-ct-cluster-cvx-rd}) and (\ref{eq:opt-ct-ue-cvx-rd}), respectively. The dual problem of (\ref{eq:opt-cvx-rd}) is
\[\min\limits_{\nu_{jL}^{(A)}, \theta_{kL}^{(A)} \geq 0} \ \sum_{k\in\mathcal{U}} f_k\left(\nu_{jL}^{(A)}, \theta_{kL}^{(A)}\right) + g\left(\nu_{jL}^{(A)}, \theta_{kL}^{(A)}\right),\]
where 
\begin{equation}\label{eq:opt-dual-UE-rd}
\begin{aligned}
f_k\left(\nu_{jL}^{(A)}, \theta_{kL}^{(A)}\right) =\max\limits_{x_{k\Cc}^{(A)}\in[0,1]} \  &\log\left(\sum_{A=1}^3 \sum_{\substack{
\Cc: \Cc\subset \Bc^{(A)},\\ |C|\leq L_{\max}^{(A)}}}x_{k\Cc}^{(A)}r_{k\Cc}^{(A)} \right) - \sum_{A=1}^3\sum_{L=1}^{L_{\max}^{(A)}}\sum_{\substack{\Cc:\Cc\subset\Bc^{(A)},\\ |\Cc|=L}}\sum_{j: j\in\Cc} \frac{\nu_{jL}^{(A)}}{S_j(L)}x_{k\Cc}^{(A)}\\
& - \sum_{A=1}^3\sum_{L=1}^{L_{\max}^{(A)}}\theta_{kL}^{(A)} \sum_{\Cc:\Cc\subset\Bc^{(A)}, |\Cc|=L} x_{k\Cc}^{(A)},
\end{aligned}
\end{equation}
and 
\begin{equation}\label{eq:opt-dual-lambda-rd}
g(\nu_{jL}^{(A)}, \theta_{kL}^{(A)}) =
\max\limits_{\substack{\sum_{L=1}^{L_{\max}^{(A)}}\lambda_{AL}\leq \mu_A,\\ \sum_{A=1}^3 \mu_A\leq 1} } \sum_{A=1}^3\sum_{L=1}^{L_{\max}^{(A)}}\left( \sum_{j: j\in \Bc^{(A)}} \nu_{jL}^{(A)}  + \sum_{k\in\Uc}\theta_{kL}^{(A)}\right)\lambda_{AL}.
\end{equation}
The function (\ref{eq:opt-dual-UE-rd}) is simmilar to (\ref{eq:opt-dual-UE}) in the paper, except that we have an additional contraint $x_{KC}^{(A)}\leq 1$. 

The constraints of (\ref{eq:opt-cvx-rd}) satisfy the Slater condition, and thus strong duality holds (i.e., the dual problem  and the original problem (\ref{eq:opt-cvx-rd}) have the same optimal value, which has the same optional solution with problem (\ref{eq:opt-cvx}) in the paper).

The optimization problem  (\ref{eq:opt-dual-UE-rd})  has the closed-form optimal solution
\begin{equation}\label{eq:dual-x-rd}
x_{k\Cc}^{(A)} =
\begin{cases}
\left[\frac{1}{\sum_{L:L=|\Cc|}\left(\sum_{j:j\in\Cc}\nu_{jL}^{(A)}/S_j(L)+\theta_{kL}^{(A)}\right)}\right]^1_0, & \text{if } \{\Cc,A\} = \{\Cc^*, A^*\},\\
0, & \text{otherwise},
\end{cases}
\end{equation}
where $[x]_0^1= \min\{1, \max\{0, x\}\}$, $\{\Cc^*, A^*\}=\arg\max_{\Cc, A}  \frac{r_{k\Cc}^{(A)}}{\sum_{L:L=|\Cc|}\left(\sum_{j:j\in\Cc}\nu_{jL}^{(A)}/S_j(L)+\theta_{kL}^{(A)}\right)}$\footnote{If we have multiple pairs of $\{\Cc^*,A^*\}$, we can pick the pair with largest $r_{k\Cc}^{(A)}$.}. 

The problem (\ref{eq:opt-dual-lambda-rd}) is an LP and one optimal solution is\footnote{If we have multiple $\{A,L\}$ pairs that maximize the $\sum_{j: j\in \Bc(A)} \nu_{jL}^{(A)}  + \sum_{k\in\Uc}\theta_{kL}^{(A)}$, we just randomly pick one.}
\begin{equation}\label{eq:lambdaopt-dualalgo-rd}
\lambda_{AL} = \left\{
\begin{aligned}
& 1, \text{ if } \{A,L\}= \arg\max_{A',L'} \sum_{j: j\in \Bc(A')} \nu_{jL'}^{(A')}  + \sum_{k\in\Uc}\theta_{kL'}^{(A')}, \\
&0, \text{ otherwise},
\end{aligned}\right.
\end{equation}
and 
\begin{equation}\label{eq:muopt-dualalgo-rd}
\mu_A = \left\{
\begin{aligned}
& 1, \text{ if there exists a band $A$ such that the above } \lambda_{AL}>0, \\
&0, \text{ otherwise}.
\end{aligned}\right.
\end{equation}

The $t$th iteration of the  algorithm is as follows.
\begin{enumerate}
\item Update the activity fractions by (\ref{eq:dual-x-rd}).
\item  Update resource allocation for different bands and clusters by (\ref{eq:lambdaopt-dualalgo-rd}) and (\ref{eq:muopt-dualalgo-rd}). 
\item Update the Lagrangian multipliers by 
\begin{equation}\label{eq:dual-nu-rd}
\nu_{jL}^{(A)}(n+1) = \left[\nu_{jL}^{(A)}(n) - \delta(n) \left(\lambda_{AL}(n) - \sum_{\substack{\Cc: \Cc\subset \Bc(A),\\ j\in \Cc, |\Cc|=L}}\frac{\sum_{k\in\mathcal{U}} x_{k\Cc}^{(A)}(n) }{S_{j}(\Csize)}\right)\right]^+,
\end{equation}
and 
\begin{equation}\label{eq:dual-theta-rd}
\theta_{kL}^{(A)}(n+1) = \theta_{kL}^{(A)}(n) - \delta(n) \left(\lambda_{AL}(n) - \sum_{\Cc: \Cc\subset\Bc(A), |\Cc|=L} x_{k\Cc}^{(A)} \right),
\end{equation}
where  $[z]^+=\max\{z, 0\}$ and $\delta(n)$ is the stepsize at the $n^{\rm th}$ iteration.
\end{enumerate}

From the above steps, we can observe that the difference in the dual algorithm by adding the redundant constraint $x_{KC}^{(A)}\leq 1$  is in (\ref{eq:dual-x-rd}), where the variable $x_{k\Cc}^{(A)}$ needs to be projected to the set [0, 1]. Based on this constraint, the subgradients $\Delta \nu_{jL}^{(A)}$ and $\Delta \theta_{kL}^{(A)}$ are bounded, which can be used to show the convergence of the dual algorithm.


\bibliographystyle{ieeetr}
\bibliography{qyeallbib}

\end{document}